\newcommand{\version}{long}
\newcommand{\ifft}{\ifundef{\shortiff}{if and only if }{iff }}
\newcommand{\spc}{{ }}
\newcommand{\indraft}[1]{\ifthenelse{\equal{\version}{draft}}{#1}{}}
\newcommand{\infinal}[1]{\ifthenelse{\equal{\version}{final}}{#1}{Only shown in the final version}}
\newcommand{\inshort}[1]{\ifthenelse{\equal{\version}{short}}{#1}{}}
\newcommand{\inlong}[1]{\ifthenelse{\equal{\version}{long}}{#1}{}}
\newcommand{\instacsornot}[2]{\ifdef{\stacs}{#1}{#2}}
\newcommand{\notinstacs}[1]{\instacsornot{}{#1}}
\newcommand{\appref}[1]{Appendix~\ref{app:#1}}
\newcommand{\secref}[1]{Section~\ref{sec:#1}}
\newcommand{\ssecref}[1]{Subsection~\ref{ssec:#1}}
\newcommand{\renameenv}[2]{
  \expandafter\let\csname #1#2\expandafter\endcsname
  \csname #1\endcsname
  \expandafter\let\csname end#1#2\expandafter\endcsname
  \csname end#1\endcsname
  \expandafter\let\csname #2\endcsname\relax
  \expandafter\let\csname end#2\endcsname\relax}
\ifundef{\defaultlists}{
  \usepackage[inline,shortlabels]{enumitem}
  \setenumerate[1]{(a),itemsep=0pt,topsep=3pt,parsep=0pt,partopsep=0pt}
  \setenumerate[2]{(i),noitemsep,topsep=3pt,parsep=0pt,partopsep=0pt}
  \setenumerate[3]{(A),noitemsep,topsep=3pt,parsep=0pt,partopsep=0pt}
  \setenumerate[4]{(I),noitemsep,topsep=3pt,parsep=0pt,partopsep=0pt}
  \setitemize{noitemsep,topsep=3pt,parsep=0pt,partopsep=0pt}
  \setdescription{noitemsep,topsep=3pt,parsep=0pt,partopsep=0pt}
  \setlist{noitemsep,topsep=3pt,parsep=0pt,partopsep=0pt}}{}
\newcolumntype{x}[1]{>{\centering\arraybackslash}m{#1}}
\newclass{\ioPSPACE}{i.o.\text{-}PSPACE}
\newlang{\Halt}{Halt}
\let\eqref\relax
\DeclareMathOperator*{\im}{Im}
\DeclareFontFamily{U}{mathx}{\hyphenchar\font45}
\DeclareFontShape{U}{mathx}{m}{n}{
      <5> <6> <7> <8> <9> <10>
      <10.95> <12> <14.4> <17.28> <20.74> <24.88>
      mathx10
      }{}
\DeclareSymbolFont{mathx}{U}{mathx}{m}{n}
\DeclareMathSymbol{\bigtimes}{1}{mathx}{"91}
\algnewcommand{\Input}{\item[\textbf{Input:}]}
\algnewcommand{\Output}{\item[\textbf{Output:}]}
\newcommand\restr[2]{{
  \left.\kern-\nulldelimiterspace
  #1
  \vphantom{\big|}
  \right|_{#2}
  }}
\newcommand{\abs}[1]{\left\vert#1\right\vert}
\newcommand{\gl}{\mathrm{GL}}
\newcommand{\spl}{\mathrm{SL}}
\newcommand{\diag}{\mathrm{diag}}
\newcommand{\aut}{\mathrm{Aut}}
\newcommand{\hol}{\mathrm{Hol}}
\renewcommand{\hom}{\mathrm{Hom}}
\newcommand{\edm}{\mathrm{End}}
\newcommand{\sym}{\mathrm{Sym}}
\newcommand{\iso}{\mathrm{Iso}}
\newcommand{\rad}{\mathrm{Rad}}
\newcommand{\setb}[2]{\left\{#1 \;\middle|\; #2\right\}}
\newcommand{\nth}[1]{\ensuremath{{#1}^{\mathrm{th}}}}
\newcommand{\thmref}[1]{Theorem~\ref{thm:#1}}
\newcommand{\lemref}[1]{Lemma~\ref{lem:#1}}
\newcommand{\defref}[1]{Definition~\ref{defn:#1}}
\newcommand{\propref}[1]{Proposition~\ref{prop:#1}}
\newcommand{\eqref}[1]{(\ref{eq:#1})}
\newcommand{\bma}{\mathbf{a}}
\newcommand{\bmb}{\mathbf{b}}
\newcommand{\bmg}{\mathbf{g}}
\newcommand{\bmh}{\mathbf{h}}
\newcommand{\bmx}{\mathbf{x}}
\newcommand{\bmy}{\mathbf{y}}
\newcommand{\bbF}{\mathbb{F}}
\newcommand{\bbZ}{\mathbb{Z}}
\newcommand{\cB}{\mathcal{B}}
\newcommand{\cC}{\mathcal{C}}
\newcommand{\ra}{\rightarrow}
\newcommand{\tril}{\triangleleft}
\def\thm@space@setup{\thm@preskip=3pt \thm@postskip=3pt}
\renewenvironment{proof}[1][\proofname]{\par
  \pushQED{\qed}
  \normalfont
  \topsep3pt \partopsep0pt 
  \trivlist
  \item[\hskip\labelsep
        \itshape
    #1\@addpunct{.}]\ignorespaces
  }{
    \popQED\endtrivlist\@endpefalse
    \addvspace{0pt plus 0pt} 
  }
\ifundef{\dontnumberwithin}{\declaretheorem[numberwithin=section]{dummy}}{\declaretheorem{dummy}} 
\declaretheorem[sibling=dummy]{theorem}
\declaretheorem[sibling=dummy]{lemma}
\declaretheorem[sibling=dummy]{definition}
\declaretheorem[sibling=dummy]{proposition}
\ifundef{\defaultthmcontinues}{\renewcommand{\thmcontinues}[1]{}}{}
\newclass{\BI}{BI}
\newclass{\CI}{CI}
\newclass{\CIS}{CI^*}
\newclass{\GIS}{GI^*}
\newclass{\GRI}{GrI}
\newclass{\GRIS}{GrI^*}
\newclass{\NGRI}{NGrI}
\newclass{\NGRIS}{NGrI^*}
\newcommand{\cart}{\times_c}
\newcommand{\bone}{\mathbf{1}}
\begin{document}

\title{On the Group and Color Isomorphism Problems}
\author{Fran\c{c}ois Le Gall \\
  {\small Graduate School of Informatics} \\
  {\small Kyoto University} \\
  {\small Email: legall@i.kyoto-u.ac.jp} \\
  \and
  David J. Rosenbaum \\
  {\small The University of Tokyo} \\
  {\small Department of Computer Science} \\
  {\small Email: djr7c4@gmail.com}}
\date{September 27, 2016}

\maketitle
\thispagestyle{empty}

\begin{abstract}
  In this paper, we prove results on the relationship between the complexity of the group and color isomorphism problems.  The difficulty of color isomorphism problems is known to be closely linked to the the composition factors of the permutation group involved.  Previous works are primarily concerned with applying color isomorphism to bounded degree graph isomorphism, and have therefore focused on the alternating composition factors, since those are the bottleneck in the case of graph isomorphism.

  We consider the color isomorphism problem with composition factors restricted to those other than the alternating group, show that group isomorphism reduces in $n^{O(\log \log n)}$ time to this problem, and, conversely, that a special case of this color isomorphism problem reduces to a slight generalization of group isomorphism.  We then sharpen our results by identifying the projective special linear group as the main obstacle to faster algorithms for group isomorphism and prove that the aforementioned reduction from group isomorphism to color isomorphism in fact produces only cyclic and projective special linear factors.  Our results demonstrate that, just as the alternating group was a barrier to faster algorithms for graph isomorphism for three decades, the projective special linear group is an obstacle to faster algorithms for group isomorphism.
\end{abstract}

\inlong{
  \newpage
  \setcounter{page}{1}}

\section{Introduction}
\label{sec:intro}
The complexity of isomorphism testing problems is worthy of study both because they are fundamental computational questions and also because many of them are not known to be in \P, but nevertheless appear to be easier than the \NP-complete problems.  The most heavily studied of these is the \emph{graph isomorphism problem}.  It is strongly suspected that graph isomorphism is not \NP-complete both because this would imply the collapse of the polynomial hierarchy~\cite{babai1985a,boppana1987a,goldwasser1989a,goldreich1991a} and also because there are subexponential time algorithms~\cite{luks1982a,babai1983a,babai1983b,babai2016a} for testing isomorphism of general graphs, which is much better than the $2^{O(n^2)}$ time complexity that we would expect based on the exponential-time hypothesis~\cite{impagliazzo1999a}.

For more than three decades, the $2^{O(\sqrt{n \log n})}$ time bound from~\cite{luks1982a,babai1983a,babai1983b} was the best known for testing isomorphism of general graphs in the worst case.  This result is based on
\begin{enumerate*}
\item reducing testing isomorphism of a pair of arbitrary graphs to testing isomorphism of many pairs of graphs of degree $\sqrt{n / \log n}$ using Zemlyachenko's lemma (cf.~\cite{babai1981a}), and
\item reducing~\cite{luks1982a,babai1983a,babai1983b} testing isomorphism of graphs of degree at most $d$
\end{enumerate*}
to another problem known as the \emph{color automorphism problem}~\cite{luks1982a,babai1983a,babai1983b}.

In this problem, we are given a set $X$ of size $n$ and a coset $\sigma \Gamma$ where $\Gamma$ is a subgroup of the symmetric group on $X$ and $\sigma$ is a permutation of $X$.  We are also given a function $f : X \ra [n]$ that specifies the color of each element of $X$.  The problem is to compute all $\pi \in \sigma \Gamma$ such that $f (\pi x) = f(x)$ for all $x \in X$.  It is important to note that the set of all such $\pi$ forms a subcoset of $\sigma \Gamma$~\cite{luks1982a}, so the solution can be represented compactly as a coset representative along with the generators of a subgroup of $\Gamma$.  In this paper, we consider a slight generalization of the color automorphism problem in which there are two functions $f_1$ and $f_2$.

\begin{definition}
  \label{defn:color-iso}
  In the \emph{color isomorphism problem}, we are given a set $X$ of size $n$ and a coset $\sigma \Gamma$ containing permutations of $X$ and two functions $f_1 : X \ra [n]$ and $f_2 : X \ra [n]$.  The goal is to find all $\pi \in \sigma \Gamma$ such that $f_2 (\pi x) = f_1(x)$ for all $x \in X$.
\end{definition}

Note that we can recover the color automorphism problem by stipulating that $f_1 = f_2$.

The complexity of the color isomorphism problem is strongly dependent on the non-Abelian composition factors (which we define later) of the group $\Gamma$: if every non-Abelian composition factor of $\Gamma$ is isomorphic to a subgroup of the symmetric group on $d$ elements, then any color isomorphism problem on a subcoset $\sigma \Gamma$ can be solved in $n^{O(d / \log d)}$ time~\cite{babai1983a,babai1983b}\footnote{In these previous papers, color automorphism and string canonization were the problems that were considered.  However, the techniques in these papers can easily be adapted to solve color isomorphism within the same bounds.}.  The barrier to improving the $n^{O(d / \log d)}$ time bound, and hence also the $2^{O(\sqrt{n \log n})}$ time bound for graph isomorphism, depends only on the composition factors that are isomorphic to alternating groups~\cite{babai1983b}.  All other composition factors can be handled simply by brute force in $n^{O(\log n)}$ time by a result of Pyber~\cite{pyber1991a}.

In a recent paper~\cite{babai2016a}, Babai overcame the obstacle of the alternating composition factors with an algorithm that solves the graph and color isomorphism problems in $2^{O(\log^{2 + O(1)} n)}$ time.  This is almost an exponential speedup.  However, the statement of this result does not allow us to obtain speedups for other types of composition factors since --- as mentioned above --- they can be dealt with exhaustively in $n^{O(\log n)}$ time.

In this work, we study the complexity of color isomorphism problems involving composition factors other than the alternating group.  We accomplish this by comparing this class of color isomorphism problems to the \emph{group isomorphism problem} --- a fundamental problem in computational group theory that has been well studied and has seen a surge of activity in the last few years~\cite{lipton1977a,savage1980a,vikas1996a,arvind2003a,kavitha2007a,legall2008a,qiao2011a,babai2011a,codenotti2011a,babai2012a,babai2012b,lewis2012a,grochow2013a,papakonstantinou2014a,grochow2015a}.  As we will show, group isomorphism depends only on the cyclic and projective special linear composition factors, so we focus our attention on color isomorphism problems with cyclic and projective special linear composition factors.  Our first result shows that, just as the alternating group was a barrier to placing graph isomorphism in quasi-polynomial time for more than thirty years, the projective special linear group is a barrier to faster algorithms for group isomorphism.  Before we can present our reduction, we need to introduce some notation.  Let \CI\spc denote the class of all color isomorphism problems and let \CIS\spc be all color isomorphism problems with cyclic and projective special linear composition factors.  We denote the group isomorphism problem by \GRI.

\begin{restatable}{theorem}{redgroup}
  \label{thm:red-group}
  \GRI\spc is Turing reducible to \CIS\spc in $n^{O(\log \log n)}$ time.
\end{restatable}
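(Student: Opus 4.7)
The plan is to reduce group isomorphism to a sequence of color isomorphism problems that respect a chief series.  Given groups $G$ and $H$ of order $n$ as Cayley tables, compute a chief series $1 = G_0 \triangleleft G_1 \triangleleft \cdots \triangleleft G_t = G$ of length $t \leq \log_2 n$ and an analogous series for $H$.  Each chief factor $G_{i+1}/G_i$ is characteristically simple, hence either an elementary abelian group $\bbF_{p_i}^{k_i}$ or a direct product $T_i^{k_i}$ of copies of a non-Abelian finite simple group $T_i$.  If the sequences of factor types in the two series fail to match, output ``not isomorphic''; otherwise the task becomes the lifting of this type-wise matching to an honest isomorphism.

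Build the isomorphism iteratively, maintaining a coset $\mathcal{C}_i$ of candidate partial isomorphisms $G_i \to H_i$.  At the $i$-th step, extend $\mathcal{C}_i$ across the chief factor $V = G_{i+1}/G_i$.  When $V$ is elementary abelian of rank $k_i$, the data that must match across the extension --- the conjugation action of $G/G_i$ on $V$ together with the $2$-cocycle linking $V$ to $G_i$ read off the Cayley table --- can be encoded as a color isomorphism problem on a set of size $p_i^{k_i} \le n$ whose ambient group $\Gamma$ is contained in $\gl(k_i, p_i)$.  Because the composition factors of $\gl(k_i, p_i)$ are the cyclic groups $\bbZ_{p_i-1}$ and $\bbZ_{\gcd(k_i,p_i-1)}$ together with the simple factor $\mathrm{PSL}(k_i, p_i)$, this is a legitimate $\CIS$ query.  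When $V \cong T_i^{k_i}$ is non-Abelian, forgo the oracle and instead enumerate the at most $|\aut(T_i^{k_i})| = |\aut(T_i)|^{k_i} \cdot k_i!$ candidate isomorphisms of $V$, recursing on each branch.

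The main obstacle is keeping the ambient groups of the oracle queries inside the class allowed by $\CIS$: the naive choice $\aut(T_i^{k_i}) \leq \aut(T_i) \wr S_{k_i}$ would inject $T_i$ and $A_{k_i}$ as composition factors, placing the instance outside $\CIS$.  Handling the non-Abelian chief factors by explicit enumeration rather than via the oracle circumvents this.  For the complexity, the Schreier conjecture (a consequence of CFSG) bounds $|\out(T_i)| \leq (\log |T_i|)^{O(1)}$, and the chief-factor constraint $|T_i|^{k_i} \leq n$ forces $k_i = O(\log n / \log |T_i|)$; summing over all non-Abelian levels one obtains $\sum_i k_i \log k_i = O(\log n \cdot \log\log n)$, so the product of enumeration costs telescopes to $n^{O(\log\log n)}$.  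Combined with the $O(\log n)$ abelian-level oracle calls and polynomial per-step bookkeeping, this yields the claimed $n^{O(\log\log n)}$-time Turing reduction from $\GRI$ to $\CIS$.
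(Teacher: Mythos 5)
There is a genuine gap, in fact two. First, your reduction hinges on fixing one chief series for $G$ and one for $H$ and then lifting an isomorphism along them, but a chief series is not a canonical object: a group generally has many chief series, and no isomorphism $G \ra H$ need carry your chosen series of $G$ to your chosen series of $H$ (it only carries it to \emph{some} chief series of $H$). So checking that "the factor types match" and then lifting along the two fixed series can fail on isomorphic groups, and repairing this by searching over the chief series of $H$ costs $n^{\Theta(\log n)}$ --- exactly the barrier the known composition-series results run into. The paper sidesteps this by working with the \emph{radical derived series} $\rad(G)^{(m)} \tril \cdots \tril \rad(G) \tril G$, which is constructed in only one way from $G$, so series isomorphism coincides with group isomorphism; your argument has no substitute for this canonicity.

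Second, the level-by-level decomposition into independent oracle calls is not justified. When you extend across an abelian chief factor, the constraints are not a self-contained problem on $V = G_{i+1}/G_i$ with ambient group inside $\gl(k_i,p_i)$: the cocycles and conjugation actions you must match depend on the (as yet undetermined, coset-sized) choices made at all other levels, and conversely the solutions found at level $i$ must be filtered by consistency with levels above and below. The multiplication table couples every level, which is why the paper does not issue one query per factor but instead formulates a single global \CIS\spc instance on the triples $(x,y,xy)$, whose ambient group is the iterated wreath product $\hol(\tilde F_1) \wr \cdots \wr \hol(\tilde F_m)$; making that legitimate requires the paper's two key technical ingredients, namely the lemma that every series isomorphism lies in this wreath product of holomorphs, and the lemma that $\aut(A)$ for abelian $A$ has only cyclic and projective special linear composition factors (note also that the holomorph, not just $\gl(k_i,p_i)$, is needed to absorb the freedom in choosing lifts/coset representatives). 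Your handling of the non-abelian factors by enumeration is in the right spirit --- it parallels the paper's use of the $n^{O(\log\log n)}$ bound on automorphisms of $G/\rad(G)$, and your CFSG-based count of the total branching is essentially sound --- but without canonicity of the series and without a mechanism that ties the per-level data into one coset problem over the whole group, the reduction as written does not compute $\iso(G,H)$.
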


It is important to note that this result is trivial when \CIS\spc is replaced with \CI.  The main difficulty here is to restrict the non-Abelian composition factors to the projective special linear group.

Our proof is based a holomorph trick suggested by Babai (personal communication), a new notation for dealing with iterated wreath products, structural results on automorphisms of finite\footnote{All of the groups that we deal with in this paper are finite and we shall omit the adjective finite from now on.} Abelian groups (which we prove using~\cite{hillar2006a}) and the algorithm of~\cite{babai2011a}.  The holomorph trick can also be replaced by the framework introduced by Luks' in his recent paper~\cite{luks2015a} that shows how to test isomorphism of composition series in polynomial time.  We discuss the relationship between our work and Babai's holomorph trick in more detail in \ssecref{prev-comp}.

Our next result is the simple (but to our knowledge previously unknown) observation\inlong{\footnote{The proof of this result is straightforward.  However, for the sake of completeness, we give a proof in \appref{color-graph}.}} that the color isomorphism problem is equivalent to a slight generalization of graph isomorphism.  The main reasons for mentioning this result are to make the relationship between general graph isomorphism and color isomorphism explicit and also to motivate one of our later results in this paper.  Let \GIS\spc be the problem of computing all isomorphisms between two graphs $X$ and $Y$ that are contained in a specified subcoset $\sigma \Gamma$ that maps the vertices of $X$ to the vertices of $Y$.

\begin{restatable}{theorem}{cigis}
  \label{thm:cigis}
  \CI\spc and \GIS\spc are equivalent under polynomial-time many-one reductions.
\end{restatable}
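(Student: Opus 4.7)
The plan is to prove each direction by a direct construction, since both are polynomial-time many-one reductions.

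For the reduction from \CI\spc to \GIS, I would fix, for each color $c \in [n]$, a rigid gadget $G_c$ on the same labeled vertex set $[N]$ with $N = 2n$, pairwise non-isomorphic (for instance, $G_c$ a rooted tree consisting of a pendant path of length $c$ at the root plus $N-c-1$ leaves hanging off the root, so $G_c \not\cong G_{c'}$ whenever $c \ne c'$). Given an instance $(X, \sigma\Gamma, f_1, f_2)$ of \CI, I would take the common vertex set $V = X \cup \{x^{(i)} : x \in X,\, i \in [N]\}$ and construct $X'$ by attaching $G_{f_1(x)}$ (realized on the vertices $x^{(1)}, \dots, x^{(N)}$, with $x^{(1)}$ further joined to $x$) for every $x \in X$; $Y'$ is built in the same way using $G_{f_2(x)}$ instead. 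Extend $\sigma\Gamma$ to $\sigma'\Gamma' \subseteq S_V$ by setting $\pi(x^{(i)}) = (\pi x)^{(i)}$ for every $\pi \in \sigma\Gamma$ and every $i$. Because the $G_c$ are pairwise non-isomorphic as \emph{labeled} graphs on $[N]$ and the canonical bijection respects the index $i$, a permutation $\pi' \in \sigma'\Gamma'$ is an isomorphism $X' \to Y'$ \ifft\ $f_1(x) = f_2(\pi x)$ for every $x \in X$, which is exactly the \CI\spc condition. Running the \GIS\spc oracle on $(X', Y', \sigma'\Gamma')$ and projecting back to $\sigma\Gamma$ yields the \CI\spc solution coset.

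For the opposite direction, given a \GIS\spc instance consisting of graphs $X$ and $Y$ on a common vertex set $V$ of size $n$ together with a coset $\sigma\Gamma \subseteq S_V$, I would take the diagonal action of $\Gamma$ on $V \times V$ and set $f_1, f_2 : V \times V \to \{0,1\}$ to be the indicator functions of $E(X)$ and $E(Y)$ respectively. A permutation $\pi \in \sigma\Gamma$ is an $X$-to-$Y$ isomorphism \ifft\ its induced action $\pi'$ on $V \times V$ satisfies $f_2(\pi'(u,v)) = f_1(u,v)$ for all $(u,v)$, so one oracle call to \CI\spc with input the induced coset $\sigma'\Gamma'$ returns the desired answer. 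Since the homomorphism $\Gamma \to S_{V \times V}$ is faithful, generators of the returned solution coset can be pulled back to generators in $S_V$ in polynomial time.

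The argument is largely bookkeeping; no serious obstacle arises. The two points requiring mild care are, first, that the extension $\pi \mapsto \pi'$ in the first reduction really is a well-defined group action on $V$ (it is, once the canonical labelings $x^{(1)}, \dots, x^{(N)}$ are fixed up front, since then $\pi\tau$ and $\pi \circ \tau$ agree on each $x^{(i)}$); and second, that the pairwise non-isomorphism of the gadgets really does preclude spurious solutions in $\sigma'\Gamma'$, which is immediate because every element of $\sigma'\Gamma'$ maps the gadget bubble at $x$ to the gadget bubble at $\pi x$ through the fixed labeled bijection $i \mapsto i$, so an isomorphism $X' \to Y'$ in $\sigma'\Gamma'$ forces the labeled graphs $G_{f_1(x)}$ and $G_{f_2(\pi x)}$ to coincide.
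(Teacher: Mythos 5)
Your proof is correct, and its overall shape matches the paper's: both directions are handled by direct encodings, and your \GIS-to-\CI\spc direction (color the pairs $V \times V$ by the indicator functions of the two edge sets and let the coset act diagonally) is exactly the paper's. The only real divergence is in the \CI-to-\GIS\spc direction, where the choice of gadget and the source of correctness differ. The paper attaches, per color class, a shared hub vertex joined to a clique of size $f_i(x)+2$ and to all vertices of that color, and rules out spurious isomorphisms structurally, by observing that cliques of size $\geq 3$ occur only inside gadgets; this argument does not depend on how the coset is extended to the gadget vertices. You instead attach a separate labeled tree gadget $G_{f_1(x)}$ (resp.\ $G_{f_2(x)}$) to each vertex and extend the coset \emph{canonically} by $\pi(x^{(i)}) = (\pi x)^{(i)}$, so that any permutation in the extended coset carries the bubble at $x$ to the bubble at $\pi x$ via the identity on indices; correctness then reduces to the purely local fact that $G_c \neq G_{c'}$ as labeled graphs for $c \neq c'$ (rigidity and even non-isomorphism of the gadgets are not actually needed, only distinctness). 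Your version makes the coset extension explicit, which the paper leaves implicit, at the cost of a larger vertex set ($n + 2n^2$ versus the paper's more economical shared gadgets); both are polynomial, and both yield the claimed many-one reductions with the answer coset recovered by restriction to the original domain.
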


Next, we explore the question of how much more difficult \CIS\spc is compared to \GRI.  To do this, we introduce a slight generalization of group isomorphism and show that a special case of \CIS\spc can be reduced to it.  To this end, we define \GRIS\spc to be the problem of computing all isomorphisms from a group $G$ to a group $H$ that are contained within a coset $\sigma \Gamma$ that maps the elements of $\Gamma$ to $H$ analogously to \GIS\spc for graphs.  It seems unlikely that \GRIS\spc is much harder than \GRI; currently, the fastest worst-case algorithms for both problems run in $n^{O(\log n)}$ time.  Before we can define the special case \BI\spc of \CIS\spc that we will reduce to \GRIS, we need to introduce some additional terminology.

\begin{definition}
  Let $f : B \times B \ra A$ be a bilinear map defined on Abelian groups (regarded as $\bbZ$-modules).  Then an \emph{isometry} is a map $\alpha \in \aut(B)$ such that $f(x, y) = f(\beta x, \beta y)$ for all $x, y \in B$.
\end{definition}

It is easy to see that the isometries of a bilinear map form a group.  Now we can define the problem \BI.

\begin{definition}
  \label{defn:bi}
  Let $A$ and $B$ be Abelian groups and let $f : B \times B \ra A$ be a bilinear map given as a table of the values $f(x, y)$ for all $x, y \in B$.  Then \BI\spc is the problem of computing the isometry group of $f$.
\end{definition}

This redundancy in the representation of $f$ is similar in spirit to the Cayley table representation used in group isomorphism.  Another reason to use this redundant representation is that it means that \BI\spc corresponds to a color isomorphism problem involving an action of $\aut(B)$ on the Abelian group $B \times B \times A$.  Since the composition factors of $\aut(B)$ are either cyclic or projective special linear, this implies that \BI\spc is a special case of \CIS.

Versions of this problem in which the bilinear map is specified compactly as a matrix and $A$ and $B$ are vector spaces have been studied.  Brooksbank and Wilson showed~\cite{brooksbank2012a} that for bilinear maps that are Hermitian, one can compute the isometry group in polynomial time.  Hermitian matrices generalize the symmetric and skew-symmetric matrices, but there are many matrices that are not hermitian.

Our next result shows that \BI\spc is polynomial-time many-one reducible to \GRIS.

\begin{restatable}{theorem}{bigris}
  \label{thm:bi-gris}
  \BI\spc is polynomial-time many-one reducible to \GRIS.
\end{restatable}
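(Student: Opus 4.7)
The plan is to reduce \BI\spc to \GRIS\spc via a Baer-style encoding that stores the bilinear map $f : B \times B \to A$ as the commutator structure of a class-$2$ nilpotent group. The subtle point to watch is that a naive Heisenberg-style central extension of $A$ by $B$ would only record the antisymmetric part $f - f^T$; to capture $f$ itself, I will use two separate copies of $B$ so that the cross-commutators pick out the value $f(b, c)$ directly.

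First I would construct the group $G$ with underlying set $A \times B \times B$ and multiplication
\[
(a_1, b_1, c_1) \cdot (a_2, b_2, c_2) = (a_1 + a_2 - f(b_2, c_1),\; b_1 + b_2,\; c_1 + c_2).
\]
Bilinearity of $f$ yields associativity directly, the identity is $(0,0,0)$, and inverses are of the form $(-a - f(b,c), -b, -c)$. The subgroup $A \times \{0\} \times \{0\}$ is central, the group has nilpotency class $2$, and the key identity
\[
[(0, b, 0), (0, 0, c)] = (f(b, c), 0, 0)
\]
records every value of $f$ in a cross-commutator. The order of $G$ is $|A|\cdot|B|^2$, so its Cayley table is writable in polynomial time from the input table of $f$.

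Next I would take $H = G$, set $\sigma = \id$, and let $\Gamma \le \sym(G)$ consist of the permutations
\[
\pi_\beta : (a, b, c) \mapsto (a, \beta b, \beta c), \qquad \beta \in \sym(B),
\]
which is a copy of $\sym(B)$ and is specifiable by a constant-size generating set (two lifts of standard generators of $\sym(B)$). Expanding both sides of $\pi_\beta((a_1, b_1, c_1)(a_2, b_2, c_2)) = \pi_\beta(a_1, b_1, c_1)\,\pi_\beta(a_2, b_2, c_2)$, one sees that $\pi_\beta$ is a homomorphism $G \to H$ precisely when $\beta$ is a group homomorphism of $B$ and $f(\beta b_2, \beta c_1) = f(b_2, c_1)$ for all $b_2, c_1 \in B$. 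Since $\beta$ is already a bijection, this is exactly the isometry condition, so the set of isomorphisms $G \to H$ inside $\sigma\Gamma$ is precisely $\{\pi_\beta : \beta \text{ an isometry of } f\}$.

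A \GRIS\spc oracle returns this set as a subcoset of $\sigma\Gamma$, represented by a coset representative and generators of its stabilizer. Pulling back under the bijection $\pi_\beta \leftrightarrow \beta$ yields a coset representative and generators for the isometry group of $f$, completing the many-one reduction. The principal obstacle is the one flagged at the outset: ensuring that the group encoding records $f$ and not merely its antisymmetrization. This is precisely the role of the asymmetric cross-term $-f(b_2, c_1)$ in the multiplication, together with the use of two copies of $B$; the rest of the argument is then an almost mechanical verification of associativity and of the homomorphism condition on $\pi_\beta$.
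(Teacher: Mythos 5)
Your reduction is correct, and it follows the same overall strategy as the paper --- encode $f$ into a nilpotent group of class at most $2$ via a central extension, then use the coset restriction built into \GRIS\spc to cut the isomorphism set down to exactly the isometries --- but with a genuinely different realization. The paper uses precisely the ``naive'' extension you set aside: $G_f$ on the set $B \times A$ with $(b_1,a_1)(b_2,a_2) = (b_1b_2,\, a_1a_2 f(b_1,b_2))$, together with the coset $\Gamma = \sym(B) \times \sym(A)$ acting by $(\pi,\sigma)(\ell(b)a) = \ell(\pi b)\sigma a$. Your concern that such an extension only records the antisymmetrization of $f$ is defused there not by doubling $B$ but by the coset restriction itself: any isomorphism of this product form fixes the identity, hence stabilizes $A$ and the transversal $\ell[B]$ setwise, hence maps $\ell(b)$ to exactly $\ell(\beta b)$, and Lemmas~\ref{lem:isom-aut}, \ref{lem:aut-hom} and~\ref{lem:aut-isom} show that this pins down the full condition $f(x,y)=f(\beta x,\beta y)$, not merely its skew part. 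Your construction on $A \times B \times B$ of order $\abs{A}\abs{B}^2$, with the cross-commutator identity $[(0,b,0),(0,0,c)]=(f(b,c),0,0)$ and the diagonal $\Gamma \cong \sym(B)$ fixing $A$ pointwise, buys a completely mechanical verification --- a single expansion of the homomorphism condition for $\pi_\beta$ replaces the paper's three lemmas --- and a trivial pullback from the \GRIS\spc answer (no projection onto a $\sym(A)$-component is needed), at the cost of a group quadratically larger in $\abs{B}$ than the paper's $\abs{A}\abs{B}$. Both groups are nilpotent of class at most $2$, so your variant equally supports the paper's remark that the reduction in fact lands in \NGRIS.
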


Previously, Grochow and Qiao studied~\cite{grochow2013a,grochow2015a} a generalization of the reverse direction of this reduction and used it to prove several interesting results.  Our result complements theirs, and, to our knowledge, is the first reduction to (a slight generalization of) group isomorphism.

Our proof is based on using cohomology to construct a group $G_f$ from the bilinear map $f$ that contains $A$ as a normal subgroup. We then compute a certain subgroup of the automorphism group of $G_f$ by solving a problem in \GRIS.  We show that every automorphism $\phi$ in this subgroup defines a map $\varphi_{\phi} : G_f / A \ra A$ and prove that $\phi$ gives rise to an isometry of $f$ precisely when $\varphi_{\phi}$ is a homomorphism.  By further restricting the subcoset in the instance of \GRIS\spc above, we can ensure that every element of the resulting subgroup of automorphisms gives rise to a homomorphism and that every isometry can be obtained from an automorphism in this subgroup.

Additionally, \BI\spc is likely to be equivalent to \GRI.  The reason is as follows.  The hard case of \GRI\spc is conjectured to be testing isomorphism of nilpotent groups of class $2$ (\NGRI).  \thmref{bi-gris} in fact reduces to \NGRIS, where \NGRIS\spc is defined analogously to \GRIS.  Moreover, one can show that \NGRI\spc reduces to the problem of computing all $\alpha \in \aut(A)$, $\beta \in \aut(B)$ and $b \in \cB$ such that $f(x, y) = \alpha f(\beta x, \beta y) + b$ for all $x, y \in B$ where $A$, $B$ are abelian groups, $\cB$ is a known subgroup of the group of bilinear maps from $B \times B$ to $A$ and $f : B \times B \ra A$ is bilinear.  One can then recover \BI\spc as a special case by setting $\alpha = 1$ and $b = 0$.  This only removes cyclic composition factors from the resulting corresponding color isomorphism problem.

\section{Background}
\label{sec:background}
In this section, we introduce some of the basic group theoretic concepts used later in the paper.  We also discuss related results on testing isomorphism of composition series.

\subsection{Group theory background}
A \emph{subnormal} series of a group $G$ is a chain of subgroups $G_1 = 1 \tril G_2 \tril \cdots \tril G_k = G$ where each subgroup is normal in the next and $1$ denotes the trivial subgroup.  The \emph{factor groups} of this series are the groups $G_{i + 1} / G_i$.  For a group $G$, let $[G, G]$ be the subgroup of $G$ generated by the \emph{commutators} $[g_1, g_2] = g_1 g_2 g_1^{-1} g_2^{-1}$ where $g_1, g_2 \in G$.  One series that will be of interest is the \emph{derived series} $G^{(k)} \tril \cdots \tril G^{(0)} = G$.  Here, $G^{(0)} = G$ and each $G^{(i + 1)} = [G^{(i)}, G^{(i)}]$ and $k$ is the smallest natural number such that $G^{(k + 1)} = G^{(k)}$.  It need not be the case that $G^{(k)} = 1$.  If this holds, then $G$ is a \emph{solvable group}.

If a subnormal series is maximal so that no more intermediate subgroup can be inserted that are distinct from the subgroups already in the series, then it is called a \emph{composition series}.  The factor groups of a composition series are called \emph{composition factors} and are \emph{simple groups}.  That is, each of their normal subgroups is either the whole group or is trivial.  One can equivalently define a composition series as a subnormal series in which all the factor groups are simple.  In a solvable group, all of the composition factors are cyclic so that there are no non-Abelian composition factors.

Much of the motivation for this work is based on a simple group called the \emph{projective special linear group}.  To obtain this group, one starts with the \emph{general linear group} $\gl_d(\bbF)$ of all invertible matrices over the field $\bbF$.  By restricting to the subgroup of matrices with determinant $1$, we obtain the \emph{special linear group} $\spl_d(\bbF)$.  The \emph{projective linear group} is then defined to be the quotient of $\spl_d(\bbF)$ mod the subgroup consisting of multiples of the identity matrix by roots of unity.  Dealing with the projective linear group is about as difficult as dealing with the general linear group since its non-Abelian composition factors consist of a single copy of the projective special linear group.

The \emph{holomorph} $\hol(G)$ of a group $G$ is a semidirect product of $G$ with its automorphism group.  An element $(g, \phi) \in \hol(G)$ acts on each element $x \in G$ by $(g, \phi)(x) = g \cdot \phi(x)$.  The product of two elements $(g_1, \phi_1), (g_2, \phi_2) \in \hol(G)$ is  $(g_1, \phi_1) \cdot (g_2, \phi_2) = (g_1 \phi_1(g_2), \phi_1 \phi_2)$.  The \emph{wreath product} $G \wr H$ of two permutation groups $G$ and $H$ that act on the sets $Y$ and $X$ is a semidirect product of the groups $G^H$ and $H$.  Here, $G^H$ means a direct product of $\abs{H}$ copies of $G$; each copy of $G$ is indexed by a different element of $H$.  Each element of the wreath product $G \wr H$ corresponds to a pair $(\bmg, h)$ where $\bmg \in G^H$ is a vector indexed by the elements of $H$.  Then $(\bmg, h)$ acts on an element $x, y \in X \times Y$ by $(g, \bmh)(x, y) = (g_{h y} x, h y)$.  Intuitively, a wreath product corresponds to a group of automorphisms of a full rooted tree of depth $2$.  The children of the root correspond to the elements of $Y$ while their children correspond to elements of $X \times Y$.  The element $h$ indicates how the children of the root should be permuted and the vector $\bmg$ of elements of $G$ indexed by $H$ indicates how the children of each child of the root should be permuted after the children of the root are permuted.  In particular, if $T$ is a rooted tree of depth $2$ where all nodes at depth $1$ have degree $d_1$ and all nodes at depth $d_2$, then $\aut(T) = S_{d_2} \wr S_{d_1}$.

In this section, we shall be concerned with \emph{iterated wreath products} of the form $G_1 \wr \cdots \wr G_k$ of groups $G_i$ which each acts on a set $X_i$.  The iterated wreath product $G_1 \wr \cdots \wr G_k$ acts on $X_1 \times \cdots \times X_k$ by recursively applying the definition of a wreath product.  If one imagines a rooted tree where the first level consists of the elements of $X_k$ and the \nth{i} level consists of the elements of $X_{k - i + 1} \times \cdots \times X_k$, then $G_1 \wr \cdots \wr G_k$ is a group of automorphisms of this rooted tree.  The group $G_k$ determines how the children of the root are permuted and there is a copy of each $G_{k - i + 1}$ for all $(x_{k - i + 1}, \ldots, x_k) \in X_{k - i + 1} \times \cdots \times X_1$ that determines how its children are permuted.  If one considers the full rooted tree of depth $k$ where every node in the $\nth{i}$ level has degree $d_i$, then its automorphism group is the iterated wreath product $S_{d_k} \wr \cdots \wr S_{d_1}$.  This notation quickly becomes cumbersome to deal with as the number of groups $k$ increases.  We address this problem by introducing a new notation for wreath products that is much more convenient for our purposes in \secref{red-group-iso}.

\subsection{Previous work on composition series isomorphism}
\label{ssec:prev-comp}
Our techniques rely on recent ideas by Luks~\cite{luks2015a} and Babai (personal communication) on composition series isomorphism.  We say that two series $S$ and $S'$ for groups $G$ and $H$ are isomorphic if there is an isomorphism from $G$ to $H$ that maps each subgroup in $S$ to the corresponding subgroup in $S'$.  Babai showed (personal communication) that if $\phi$ is an isomorphism between subnormal series\footnote{Only the case of composition series is relevant in this paper; however, Babai's result applies more generally to arbitrary subnormal series.} $S$ and $S'$ for groups $G$ and $H$, then $\phi \in \hol(F_0) \wr \cdots \wr \hol(F_k)$ where $F_0, \ldots, F_k$ are the factors of the isomorphic subnormal series $S$ and $S'$.  If $G$ and $H$ are solvable, then so is each $\hol(F_i)$; this implies that $\hol(F_0) \wr \cdots \wr \hol(F_k)$ is also solvable.  Since $\hol(F_0) \wr \cdots \wr \hol(F_k)$ can be given as a permutation group with $2 (k + 1)$ generators and color isomorphism problems on solvable groups can be handled in polynomial time~\cite{palfy1982a,babai1983a}, this implies that testing isomorphism of composition series of solvable groups is in polynomial time since $\hol(F_0) \wr \cdots \wr \hol(F_k)$ does not have any non-Abelian composition for solvable groups.

The \emph{solvable radical} $\rad(G)$ of a group $G$ is its unique maximal solvable normal subgroup.  Babai (personal communication) further proved that one can decide isomorphism of subnormal series of arbitrary groups in $n^{O(\log \log n)}$ time by using the algorithm of~\cite{babai2011a} assuming that they have the form $1 \tril G_1 \tril \cdots \tril G_k \tril \rad(G) \tril G_{k + 1} \tril \cdots \tril G_m = G$.  Using different but related ideas, Luks' went further and showed~\cite{luks2015a} that testing isomorphism of arbitrary composition series can be done in polynomial time.  In an upcoming paper (cf.~\cite{luks2015a}), Luks' plans to build this into the stronger result that canonical forms of composition series can be computed in polynomial time.  Let $p$ be the smallest prime divisor of the group.  Since every group has at most $n^{(1 / 2) \log_p n + O(1)}$ composition series, this method can be combined with the bidirectional collision detection methods introduced by the second author~\cite{rosenbaum2013b} (which provide a deterministic square-root speedup) to solve group isomorphism in $n^{(1 / 4) \log_p n + O(1)}$ time.

In this work, we apply Babai's holomorph trick and his idea to a different series that we call the \emph{radical derived series}\footnote{A similar series with elementary Abelian factors appears in~\cite{cannon2003a}.} (which we shall define shortly).  Unlike the classes of composition series and subnormal series, this series has the property that there is only one way to construct it for a given group.  Consequently, if $S$ and $S'$ denote the radical derived series for the groups $G$ and $H$, then $G$ and $H$ are isomorphic \ifft $S$ and $S'$ are isomorphic.  The advantage of this approach is that it allows us to avoid the $n^{(1 / 4) \log_p n}$ factor in the runtime above.  The difficulty of the group isomorphism problem is instead handled by allowing projective special linear composition factors in the resulting color isomorphism problem.  The radical derived series is defined as follows.
\begin{definition}
  Let $G$ be a group.  Then the \emph{radical derived series} of $G$ is
  \begin{equation*}
    \rad(G)^{(m)} = 1 \tril \rad(G)^{(m - 1)} \tril \cdots \tril \rad(G)^{(0)} = \rad(G) \tril G
  \end{equation*}
\end{definition}
Here, $\rad(G)^{(i)}$ denotes the \nth{i} subgroup in the derived series of $\rad(G)$ starting with $\rad(G)$.

Because the iterated wreath products that arise in this reduction are quite complicated and difficult to handle, we also introduce a new notation for describing elements of iterated wreath products which makes our proofs much easier.  It is our hope that our notation will prove useful in future work in this area.  Our proof also requires us to prove a result on the composition factors of the automorphism groups of Abelian groups.  We accomplish this by using the framework for dealing with automorphisms of Abelian groups given in~\cite{hillar2006a}.

\section{Reducing group isomorphism to color isomorphism}
\label{sec:red-group-iso}
In this section, we prove that group isomorphism reduces to the color isomorphism problem with cyclic and projective special linear composition factors.

The first step is to identify the elements of $G$ and $H$.  Clearly, this does not solve the isomorphism problem since the resulting groups can have different multiplication rules and the identification does not necessarily yield an isomorphism.  We accomplish this using the following definitions.

\begin{definition} 
  \label{defn:hat-G}
  Let $G$ be a group, let $G_m = 1 \tril \cdots \tril G_0 = G$ be its radical derived series and let $F_i = G_{m - i} / G_{m - i + 1}$ and choose an arbitrary lift $\ell_i : F_i \ra G_i$ for each $1 \leq i \leq m$.  Then for each $g \in G$, there exists a unique $(x_1, \ldots, x_m) \in F_1 \cart \cdots \cart F_m$ such that $g = \ell_m(x_0) \cdots \ell_1(x_{m - 1})$.  Let  $\ell : F_1 \cart \cdots \cart F_m \ra G$ denote this bijection.  Then we define $\hat G$ to be the group on the set $F_1 \cart \cdots \cart F_m$ whose multiplication rule is induced by $G$ under the bijection $\ell$.
\end{definition}

Here, we distinguish between Cartesian products (denoted by $\cart$) which operate on sets and direct products (denoted by $\times$) which operate on groups.  Therefore, $F_1 \cart \cdots \cart F_m$ is the subset of $\hat G$ that corresponds to $(\hat G)^{(i)}$.  We use $\cart$ instead of $\times$ in order to avoid suggesting that $F_1 \cart \cdots \cart F_m$ is a direct product of the groups $F_1, \ldots, F_m$.  This would be very misleading since $(\hat G)^{(i)}$ can be non-Abelian. 

Our next step is to identify the factor groups in the radical derived series for the groups $G$ and $H$.  Let us say that the \emph{canonical representation} of an Abelian group $A$ is the unique group of the form $\cC(A) = \bigtimes_{i = 1}^k \bbZ_{p_i}^{e_i}$ that is isomorphic to $A$ where $p_1 < \cdots < p_k$ are primes and each $e_i$ is a natural number.  This takes care of the Abelian factors in the radical derived series.  However, we also need a way to identify the factor groups $G / \rad(G)$ and $H / \rad(H)$.  These are non-Abelian groups that do not have any normal Abelian subgroups.  We identify them by using the algorithm of~\cite{babai2011a} which can enumerate all the isomorphisms between two groups of order $n$ that do not have any normal Abelian subgroups in $n^{O(\log \log n)}$ time.  To do this, we first define $\cC(G / \rad(G)) = G / \rad(G)$; we then define $\cC(H / \rad(H)) = G / \rad(G)$.  The later identification is performed in $n^{O(\log \log n)}$ time using~\cite{babai2011a}.  Note that our definition of $\cC$ on non-Abelian groups depends on whether we are given $G$ or $H$ and is thus specific to our problem instance.

\begin{definition} 
  \label{defn:tilde-G}
  Let $G$ be a group, let $G_m = 1 \tril \cdots \tril G_0 = G$ be its radical derived series and let $F_i = G_{m - i} / G_{m - i + 1}$.  For each $F_i$, let $\tilde F_i = \cC(F_i)$ and choose an arbitrary isomorphism $\varphi_i : F_i \ra \tilde F_i$.  This defines a bijection $\varphi : F_1 \cart \cdots \cart F_m \ra \tilde F_1 \cart \cdots \cart \tilde F_m$.  We let $\tilde G$ be the group on the set $F_1 \cart \cdots \cart \tilde F_m$ whose multiplication rule is induced by the group $\hat G$ under the bijection $\varphi : \hat G \ra \tilde G$.
\end{definition}

Note that $\varphi \circ \ell^{-1} : G \ra \tilde G$ is an isomorphism from $G$ to $\tilde G$.  A key fact that we shall need about $\tilde G$ is that its derived subgroups correspond to iteratively removing factors from the product $\cC(F_1) \cart \cdots \cart \cC(F_m)$ as we move down the series.  This is stated in the following proposition.  The proof follows easily from the definitions.

\begin{proposition}
  \label{prop:tg-der} 
  Let $G$ be a group, let $G_m = 1 \tril \cdots \tril G_0 = G$ be its radical derived series and let $F_i = G_{m - i} / G_{m - i + 1}$.  Then $(\tilde G)^{(i)} = \cC(F_1) \cart \cdots \cart \cC(F_{m - i})$.
\end{proposition}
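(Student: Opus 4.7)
The plan is to verify that the bijection $\varphi \circ \ell^{-1} \colon G \to \tilde G$, which by construction is a group isomorphism, carries the $i$-th subgroup $G_i$ of the radical derived series of $G$ onto the subset $\cC(F_1) \cart \cdots \cart \cC(F_{m-i})$ of $\tilde G$. Since isomorphisms respect the radical derived series, this yields the desired identity $(\tilde G)^{(i)} = \cC(F_1) \cart \cdots \cart \cC(F_{m-i})$.

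The main step is to establish the set-theoretic identity $\ell^{-1}(G_i) = F_1 \cart \cdots \cart F_{m-i}$, with the convention that the trailing coordinates are identities. By \defref{hat-G}, every $g \in G$ admits a unique representation $g = \ell_m(x_m) \cdots \ell_1(x_1)$ with each lift $\ell_j(x_j)$ lying in $G_{m-j}$. If $x_j = e$ for all $j > m-i$, then every nontrivial factor lies in some $G_{m-j} \subseteq G_i$, so the product lies in $G_i$ by normality. Conversely, if $g \in G_i$ with $i \geq 1$, reducing modulo $G_1$ forces $\ell_m(x_m) \in G_1$; since $\ell_m$ is a section of the quotient $G_0 \to F_m$ this pins down $x_m = e$. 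Iterating this peel-off argument modulo $G_2, \ldots, G_i$ then yields $x_{m-1} = \cdots = x_{m-i+1} = e$.

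Applying the componentwise bijection $\varphi = \varphi_1 \cart \cdots \cart \varphi_m$ to $\ell^{-1}(G_i)$ transports this subset to $\tilde F_1 \cart \cdots \cart \tilde F_{m-i} = \cC(F_1) \cart \cdots \cart \cC(F_{m-i})$. By the definition of $\tilde G$ as the transport of the multiplication on $\hat G$ along $\varphi$, this subset is precisely the subgroup of $\tilde G$ corresponding to $G_i$ under the isomorphism $\varphi \circ \ell^{-1}$, which is the $i$-th term $(\tilde G)^{(i)}$ of the radical derived series of $\tilde G$. There is no real obstacle to the argument, only careful bookkeeping around the indexing conventions and the identification of subsets with missing coordinates as sitting inside the full Cartesian product.
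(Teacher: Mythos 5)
Your proof is correct and is exactly the routine verification the paper has in mind (the paper omits the argument, saying only that it follows easily from the definitions): transport the radical derived series of $G$ through the isomorphism $\varphi \circ \ell^{-1}$ and check that $\ell^{-1}(G_i)$ is the set of tuples supported on the first $m-i$ coordinates. One caveat, which applies equally to your peel-off step and to the paper's statement, is that for a truly arbitrary lift one only gets $\ell_j(e) \in G_{m-j+1}$ rather than $\ell_j(e) = 1$, and then the iteration modulo $G_2, \ldots, G_i$ (and indeed the proposition itself, e.g.\ when a representative of the trivial coset at one level lies outside the next term of the series) can fail, so the lifts must implicitly be normalized to send the identity to the identity.
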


The next lemma allows us to treat isomorphisms between groups as members of a wreath product.  

\begin{lemma}
  \label{lem:iso-wr}
  Let $G$ and $H$ be groups and suppose that $\phi : \tilde G \ra \tilde H$ is an isomorphism.  Let $G_m = 1 \tril \cdots \tril G_0 = G$ and $H_m = 1 \tril \cdots \tril H_0 = H$ be the radical derived series for $G$ and $H$ and let $\tilde F_i = \cC(G_{m - i} / G_{m - i + 1}) = \cC(H_{m - i} / H_{m - i + 1})$.  Then $\phi \in \hol(\tilde F_1) \wr \cdots \wr \hol(\tilde F_m)$.
\end{lemma}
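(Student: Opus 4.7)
The plan is to exploit the fact that the radical derived series consists of characteristic subgroups, combine this with \propref{tg-der} to show that $\phi$ preserves a natural filtration in $\tilde G$ and $\tilde H$ set-theoretically, and then read off the wreath-product structure layer by layer using the homomorphism property of $\phi$.

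First I would observe that the radical derived series is a characteristic series: $\rad(\tilde G)$ is the unique maximal solvable normal subgroup, and each term in its derived series is characteristic in $\rad(\tilde G)$ and hence in $\tilde G$. Therefore $\phi$ carries $(\tilde G)^{(i)}$ to $(\tilde H)^{(i)}$ for every $i$. By \propref{tg-der}, both subgroups have the same underlying set $\tilde F_1 \cart \cdots \cart \tilde F_{m-i}$, i.e., the tuples whose top $i$ coordinates $x_{m-i+1}, \ldots, x_m$ are trivial. Dually, a coset of $(\tilde G)^{(i)}$ in $\tilde G$ is determined by its top $i$ coordinates, so the top $i$ coordinates of $\phi(x_1, \ldots, x_m)$ depend only on $x_{m-i+1}, \ldots, x_m$. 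Furthermore, $\phi$ descends to a group automorphism $\bar\phi_i$ of the quotient $(\tilde G)^{(i-1)}/(\tilde G)^{(i)}$, which is canonically $\tilde F_{m-i+1}$.

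Next I would use the normal form $g = \ell_m(x_m) \cdots \ell_1(x_1)$ together with multiplicativity of $\phi$ to extract the wreath-product element. The image $\phi(g) = \phi(\ell_m(x_m)) \cdots \phi(\ell_1(x_1))$ when reduced modulo $(\tilde H)^{(i)}$ splits into three kinds of contributions to the $(m-i+1)$-th coordinate: factors with $j < m-i+1$ lie in $(\tilde H)^{(i)}$ and contribute nothing; the factor $\phi(\ell_{m-i+1}(x_{m-i+1}))$ contributes $\bar\phi_i(x_{m-i+1}) \in \tilde F_{m-i+1}$; and the factors with $j > m-i+1$ together produce a ``shift'' $c_i(x_{m-i+2}, \ldots, x_m) \in \tilde F_{m-i+1}$ depending only on the strictly higher coordinates. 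Thus the $(m-i+1)$-th coordinate of $\phi(x_1, \ldots, x_m)$ is $c_i(x_{m-i+2}, \ldots, x_m) \cdot \bar\phi_i(x_{m-i+1})$, which is precisely the action of the element $(c_i(x_{m-i+2}, \ldots, x_m), \bar\phi_i) \in \hol(\tilde F_{m-i+1})$. Collecting these level by level fits $\phi$ into $\hol(\tilde F_1) \wr \cdots \wr \hol(\tilde F_m)$.

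The main obstacle is making the ``shift'' claim rigorous. One has to show that when the higher-level images $\phi(\ell_j(x_j))$ for $j > m-i+1$ are multiplied and the tail reduced modulo $(\tilde H)^{(i)}$, the cumulative contribution to the $(m-i+1)$-th coordinate depends only on $x_{m-i+2}, \ldots, x_m$ and not on $x_{m-i+1}$ or lower indices. This follows from the normality of $(\tilde H)^{(i)}$ together with the fact that $\phi$ induces well-defined homomorphisms on each successive quotient, but the commutator bookkeeping involved in tracking how the lifts $\ell_j$ interact under multiplication is the fiddly part. A minor additional point is that the identification of the non-Abelian top factor $\tilde F_m$ is built into $\cC$ via the algorithm of~\cite{babai2011a}, so $\bar\phi_m$ is a genuine automorphism of the same group rather than an abstract isomorphism between two copies.
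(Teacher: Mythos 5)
Your proof is correct and lands on the same structural facts as the paper (the characteristic filtration from \propref{tg-der}, the induced automorphisms on the factors as the automorphism components, and the criterion of \lemref{wr-ten}), but the execution is genuinely different and arguably cleaner. The paper also fixes the automorphism parts to be the induced maps $\phi_i$, but it leaves the translation parts as unknown functions $f_i$ and pins them down by a downward induction on the number of leading identity coordinates, followed by a separate ``weight'' argument that the assignments made at different stages are independent. You instead expand $\phi$ on the normal form $(x_1,\ldots,x_m)=e_m(x_m)\cdots e_1(x_1)$ coming from \defref{hat-G} (where $e_j(x_j)$ denotes the tuple with $x_j$ in position $j$ and identities elsewhere) and obtain closed-form shifts: the level-$j$ shift is simply the $j$-th coordinate of the partial product $\phi(e_m(x_m))\cdots\phi(e_{j+1}(x_{j+1}))$, so no induction or independence bookkeeping is needed. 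Moreover, the ``fiddly part'' you flag requires no commutator computations: to determine coordinate $j$ one can work in $\tilde H/(\tilde H)^{(m-j+1)}$, where all factors of level $<j$ vanish and the level-$j$ factor $\phi(e_j(x_j))$ lies in the normal subgroup $(\tilde H)^{(m-j)}/(\tilde H)^{(m-j+1)}\cong\tilde F_j$; since the elements with prescribed coordinates $\geq j$ form a single coset and each class has normal form $\bar e_m(y_m)\cdots\bar e_j(y_j)$, multiplying by that factor just multiplies the $j$-th coordinate by $\bar\phi_j(x_j)$, yielding coordinate $j$ equal to $c_j(x_{j+1},\ldots,x_m)\cdot\bar\phi_j(x_j)$, exactly an element of $\hol(\tilde F_j)$ acting as in the paper's convention. (With the opposite ordering convention for the normal form the shift comes out on the other side, and one then invokes that $\tilde F_j$ is Abelian for $j<m$ while the top level carries no shift --- the same commutativity fact the paper uses.) Your closing remark that the non-Abelian top factor is already identified through $\cC$ via~\cite{babai2011a}, so that $\bar\phi_m$ is an automorphism of a single copy of $\tilde F_m$, correctly handles the only level where Abelianness fails.
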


Before we can present the proof, we need a better way of dealing with iterated wreath products since using the standard wreath product definition recursively quickly becomes very cumbersome.  Without better notation, our proof would be extremely tedious.  We accomplish this by defining a wreath product as a indexes set of elements that satisfies certain conditions.  

\begin{definition}
  \label{defn:wr-ten}
  Consider the iterated wreath product $G_1 \wr \cdots \wr G_k$ of groups $G_i$ which each acts on a set $X_i$.  Let $\pi_{x_{i + 1}, \ldots, x_k} \in G_i$ for each $x_{i + 1}, \ldots, x_k \in X_{i + 1} \times \cdots \times X_k$ and each $1 \leq i \leq k$.  Then this set of elements defines the permutation $\pi(x_1, \ldots, x_k) = (\pi_{x_2, \ldots, x_k}(x_1), \ldots, \pi_{x_k}(x_{k - 1}), \pi_{()}(x_k))$.
\end{definition}

Note that in the above definition, $\pi_{()} \in G_k$ denotes the case where $i = k + 1$ so that the list of subscripts is empty.  It is easy to show that the functions $\pi$ from \defref{wr-ten} are indeed permutations and are precisely the elements of the iterated wreath product $G_1 \wr \cdots \wr G_k$.

\begin{lemma}
  \label{lem:wr-ten}
  Consider the iterated wreath product $G_1 \wr \cdots \wr G_k$ of groups $G_i$ which each acts on a set $X_i$.  Then every $\pi$ defined by \defref{wr-ten} is a permutation contained in $G_1 \wr \cdots \wr G_k$.  Moreover, every element of $G_1 \wr \cdots \wr G_k$ can be expressed in the form of \defref{wr-ten}.
\end{lemma}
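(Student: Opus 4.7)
The plan is to proceed by induction on the depth $k$ of the iterated wreath product. For the base case $k=1$, the data of \defref{wr-ten} consists of the single element $\pi_{()} \in G_1$, which acts as a permutation of $X_1$, and the claim is immediate in both directions.

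For the inductive step, I would use the identification $G_1 \wr \cdots \wr G_k = (G_1 \wr \cdots \wr G_{k-1}) \wr G_k$ as a permutation group acting on $(X_1 \times \cdots \times X_{k-1}) \times X_k$. Given a collection of data as in \defref{wr-ten}, I would partition it into two pieces: first, the single element $\pi_{()} \in G_k$, which serves as the outer permutation of the $X_k$ coordinate; and second, for each $x_k \in X_k$, the restricted subfamily $\{\pi_{x_{i+1},\ldots,x_k}\}_{i<k}$ obtained by fixing the last coordinate to $x_k$. By the inductive hypothesis applied to $G_1,\ldots,G_{k-1}$ acting on $X_1,\ldots,X_{k-1}$, each such subfamily determines a permutation $\sigma_{x_k} \in G_1 \wr \cdots \wr G_{k-1}$ of $X_1 \times \cdots \times X_{k-1}$. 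The pair $((\sigma_{x_k})_{x_k \in X_k}, \pi_{()})$ then specifies an element of the outer wreath product, and a direct unfolding shows that its action on $(x_1,\ldots,x_k)$ equals $(\sigma_{x_k}(x_1,\ldots,x_{k-1}),\pi_{()}(x_k))$, which by the inductive description of $\sigma_{x_k}$ matches the formula for $\pi(x_1,\ldots,x_k)$ given in \defref{wr-ten}. In particular, $\pi$ is a permutation and lies in $G_1 \wr \cdots \wr G_k$.

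For the converse, every element of $(G_1 \wr \cdots \wr G_{k-1}) \wr G_k$ decomposes as such a pair, and by induction each $\sigma_{x_k}$ arises from depth-$(k-1)$ data; reassembling these across all $x_k \in X_k$ together with $\pi_{()}$ yields data in the form of \defref{wr-ten} realizing the given element. Equivalently, a cardinality check confirms bijectivity: the number of data collections is $\prod_{i=1}^{k} |G_i|^{|X_{i+1}| \cdots |X_k|}$, which equals $|G_1 \wr \cdots \wr G_k|$.

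The only real obstacle is bookkeeping: one must fix an indexing convention consistent with the recursive wreath product action recalled in the background section. Specifically, the subscript $x_k$ in $\sigma_{x_k}$ (equivalently, the suffix $x_{i+1},\ldots,x_k$ in $\pi_{x_{i+1},\ldots,x_k}$) must refer to the original last coordinates rather than to their images under $\pi_{()}$. Once this convention is pinned down, each recursive layer of the wreath product contributes exactly one subscript, and both halves of the lemma follow from a single inductive step with no further computation.
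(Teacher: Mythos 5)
Your proof is correct. Note that the paper itself offers no argument for this lemma at all---it is stated right after the remark that ``it is easy to show\dots'' and the text moves straight on to Lemma~\ref{lem:iso-wr}---so there is no official proof to compare against; your induction on $k$, peeling off the top group $G_k$ via the identification $G_1 \wr \cdots \wr G_k = (G_1 \wr \cdots \wr G_{k-1}) \wr G_k$ acting on $(X_1 \times \cdots \times X_{k-1}) \times X_k$, is exactly the natural formalization of the recursive/tree description of iterated wreath products given in the background section, and your partition of the data (the single $\pi_{()} \in G_k$ plus, for each $x_k$, the depth-$(k-1)$ subfamily obtained by fixing the last subscript) is the right bookkeeping. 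You are also right to flag the indexing convention: Definition~\ref{defn:wr-ten} indexes the copy of $G_i$ by the \emph{original} suffix $(x_{i+1},\ldots,x_k)$, and since the top permutation is a bijection of the index set, reindexing shows the resulting set of permutations is the same under either convention, so the lemma is convention-independent. One small caveat: the closing ``equivalently, a cardinality check confirms bijectivity'' is only a complete argument if you also observe that the map from data collections to permutations is injective (it is, since $\pi_{x_{i+1},\ldots,x_k}(x_i)$ can be read off as the $i$-th coordinate of $\pi(x_1,\ldots,x_k)$), but your constructive converse in the inductive step already handles surjectivity, so this remark is dispensable.
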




Now, we are ready to prove \lemref{iso-wr}.  Because our proof deals with many subsequences of vectors, we introduce a shorthand.  If $\bma \in A_1 \times \cdots \times A_k$, then $\bma_{i, j}$ denotes the subsequence $(a_i, \ldots, a_j)$.

\begin{proof}
  For each $1 \leq i \leq m$, let $\phi_i = \restr{\phi}{F_i} : \tilde F_i \ra \tilde F_i$ be the automorphism of $\tilde F_i$ induced by $\phi$.  Define
  \begin{equation*}
    \tilde \phi_{\bmx_{i + 1, m}} = (f_i(\bmx_{i + 1, m}), \phi_i) \in \hol(\tilde F_i)
  \end{equation*}
  for each $\bmx \in F_1 \cart \cdots \cart F_m$ and $1 \leq i \leq m$ where the $f_{i + 1} : F_{i + 1} \cart \cdots \cart F_m \ra F_i$ are functions that are to be determined in the course of the proof.  Whatever we later choose these functions to be, note that by \defref{wr-ten} and \lemref{wr-ten}, it defines a permutation $\tilde \phi \in \hol(\tilde F_1) \wr \cdots \wr \hol(\tilde F_m)$.  Our aim is to choose them so that $\phi = \tilde \phi$.

  We accomplish this by induction on $i$.  Our goal is to show that
  \begin{equation}
    \label{eq:iso-wr-ind}
    \phi(\bone^i, \bmx_{i + 1, m}) = \tilde \phi(\bone^i, \bmx_{i + 1, m})
  \end{equation}
  for all $0 \leq i \leq m$ where $\bone^i$ is a shorthand for $\overbrace{1, \ldots, 1}^{\text{$i$ times}}$.

  We start with the basis case $i = m$.  This corresponds to the claim that
  \begin{equation*}
    \phi(\bone^m) = \tilde \phi(\bone^m)
  \end{equation*}
  which is equivalent to asserting that
  \begin{equation*}
    \bone^m = (f_1(\bone^{m - 1}), \ldots, f_{m - 1}(1), f_m)
  \end{equation*}
  Since we can choose the $f_i$ functions as desired, we simply define $(f_1(\bone^{m - 1}), \ldots, f_{m - 1}(1), f_m) = \bone^m$.  This proves the basis case, so we now proceed to the inductive case.

  Assume that \eqref{iso-wr-ind} holds for some $1 \leq i \leq m$; we will show that it holds for $i - 1$ as well.  By \propref{tg-der}, $(\bone^{i - 1}, x_i, \bone^{m - i}) \in (\tilde G)^{(m - i)}$ so
  \begin{equation*}
    \phi(\bone^{i - 1}, \bmx_{i, m}) = \phi(\bone^{i - 1}, x_i, \bone^{m - i}) \phi(\bone^i, \bmx_{i + 1, m})
  \end{equation*}
  Now, again because $\phi(\bone^{i - 1}, x_i, \bone^{m - i}) \in (\tilde G)^{(m - i)}$, we have $\phi(\bone^{i - 1}, x_i, \bone^{m - i}) = (\bma_{i - 1}(x_i), \phi_i(x_i), \bone^{m - i})$ for some $\bma_{i - 1}(x_i) \in \tilde F_1 \cart \cdots \cart \tilde F_{i - 1}$.  Thus, by the inductive hypothesis
  \begin{equation}
    \label{eq:iso-wr-ind-1}
    \phi(\bone^{i - 1}, \bmx_{i, m}) = (\bma_{i - 1}(x_i), \phi_i(x_i), \bone^{m - i}) \cdot \tilde \phi(\bone^i, \bmx_{i + 1, m})
  \end{equation}
  By \defref{wr-ten} and noting that $f_m = 1$ from the basis case, we see that $\tilde \phi(\bone^i, \bmx_{i + 1, m})$ is equal to
  \begin{equation*}
    (f_1(\bone^{i - 1}, \bmx_{i + 1, m}), \ldots, f_i(\bmx_{i + 1, m}), f_{i + 1}(\bmx_{i + 2, m}) \phi_i(x_{i + 1}), \ldots, f_{m - 1}(x_m) \phi_i(x_{m - 1}), \phi_m(x_m))
  \end{equation*}
  By applying \propref{tg-der}, we see that this is equal to
  \begin{align}
    \label{eq:iso-wr-ind-2}
    {} & (f_1(\bone^{i - 1}, \bmx_{i + 1, m}), \ldots, f_i(\bmx_{i + 1, m}), \bone^{m - i}) \\
    {} & \cdot (\bone^i, f_{i + 1}(\bmx_{i + 2, m}) \phi_i(x_{i + 1}), \ldots, f_{m - 1}(x_m) \phi_i(x_{m - 1}), \phi_m(x_m)) \nonumber
  \end{align}
  By replacing $\tilde \phi(\bone^i, \bmx_{i + 1, m})$ in \eqref{iso-wr-ind-1} with \eqref{iso-wr-ind-2}, we see that
  \begin{align*}
    \phi(\bone^{i - 1}, \bmx_{i, m}) = {} & (\bma_{i - 1}(x_i), \phi_i(x_i), \bone^{m - i}) \\
    {} & \cdot (f_1(\bone^{i - 1}, \bmx_{i + 1, m}), \ldots, f_i(\bmx_{i + 1, m}), \bone^{m - i}) \\
    {} & \cdot (\bone^i, f_{i + 1}(\bmx_{i + 2, m}) \phi_i(x_{i + 1}), \ldots, f_{m - 1}(x_m) \phi_i(x_{m - 1}), \phi_m(x_m))
  \end{align*}
  Now $\tilde F_i$ is Abelian for $i < m$ and $f_i = 1$ for $i = m$.  Thus, $\phi_i(x_i) f_i(\bmx_{i + 1, m}) = f_i(\bmx_{i + 1, m}) \phi_i(x_i)$, so
  \begin{align*}
    \phi(\bone^{i - 1}, \bmx_{i, m}) = {} & (\bmb_{i - 1}(B), f_i(\bmx_{i + 1, m}) \phi_i(x_i), \bone^{m - i}) \\
    {} & \cdot (\bone^i, f_{i + 1}(\bmx_{i + 2, m}) \phi_i(x_{i + 1}), \ldots, f_{m - 1}(x_m) \phi_i(x_{m - 1}), \phi_m(x_m)) \\
    {} = {} & (\bmb_{i - 1}(B), f_i(\bmx_{i + 1, m}) \phi_i(x_i), \ldots, f_{m - 1}(x_m) \phi_i(x_{m - 1}), \phi_m(x_m))
  \end{align*}
  where $B = (x_i, f_1(\bone^{i - 1}, \bmx_{i + 1, m}), \ldots, f_i(\bmx_{i + 1, m}))$ indicates the values on which $\bmb_{i - 1}(B)$ depends\footnote{There is no need to include $\phi_i(x_i)$ in addition to $x_i$ since $\phi$ is fixed and therefore so is $\phi_i$.}.

  If $i = 1$, then $\bmb_{i - 1}(B) = \bmb_0(B) = ()$ and we have $\phi(\bone^{i - 1}, \bmx_{i, m}) = \tilde \phi(\bone^{i - 1}, \bmx_{i, m})$ as desired.  Also, if $x_i = 1$, then $\phi(\bone^{i - 1}, \bmx_{i, m}) = \tilde \phi(\bone^{i - 1}, \bmx_{i, m})$ by the inductive hypothesis.  Therefore, we may assume that $i \geq 2$ and $x_i \not= 1$.  Then by \defref{wr-ten} we have
  \begin{equation*}
    \tilde \phi(\bone^{i - 1}, \bmx_{i, m}) = (f_1(\bone^{i - 2}, \bmx_{i, m}), \ldots, f_{i - 1}(\bmx_{i, m}), f_i(\bmx_{i + 1, m}) \phi_i(x_i), \ldots, f_{m - 1}(x_m) \phi_{m - 1}(x_{m - 1}), \phi_m(x_m))
  \end{equation*}
  so to show that $\phi(\bone^{i - 1}, \bmx_{i, m}) = \tilde \phi(\bone^{i - 1}, \bmx_{i, m})$, we need to choose $(f_1(\bone^{i - 2}, \bmx_{i, m}), \ldots, f_{i - 1}(\bmx_{i, m})) = \bmb_{i - 1}(B)$.

  Now, we just need to argue that all the assignments that we make at each step are independent.  Let us say that the weight of a vector $\bmy$ is $d - k$ where $d$ is the length of $\bmy$ and $k$ is the smallest index such that $y_k \not= 1$.  Then, the \nth{i} step of the induction assigns values to the functions $f_j$ with $j < i$ on arguments of weight exactly $m - i + 1$.  It follows that the assignments made at each step are independent which proves that \eqref{iso-wr-ind} holds for $i - 1$.  By induction, we conclude that the functions $f_i$ can be chosen so that $\phi = \tilde \phi$.
\end{proof}

Before we can prove \thmref{red-group}, we need a lemma about the structure of automorphisms of Abelian groups.

\begin{lemma}
  \label{lem:abel-aut}
  Let $A$ be an Abelian group.  Then every composition factor of $\aut(A)$ is either cyclic or projective special linear.
\end{lemma}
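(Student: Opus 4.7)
The plan is to reduce to abelian $p$-groups, apply the classical structure theorem for the automorphism group of an abelian $p$-group (as worked out in~\cite{hillar2006a}), and then read off the composition factors directly.

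First I would decompose $A$ into its $p$-primary components $A = \bigoplus_p A_p$. Since each $A_p$ is characteristic in $A$, this gives $\aut(A) \cong \prod_p \aut(A_p)$, and by Jordan--Hölder the multiset of composition factors of a direct product is the disjoint union of the composition factors of the factors. Hence it suffices to prove the lemma for a single abelian $p$-group $A_p$.

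Second, fix a prime $p$ and write $A_p \cong \bigoplus_{j=1}^k (\bbZ/p^{f_j}\bbZ)^{m_j}$ with distinct exponents $f_1 > \cdots > f_k$. Using the framework of Hillar and Rhea~\cite{hillar2006a}, there is a natural surjective homomorphism
\begin{equation*}
  \rho : \aut(A_p) \to \prod_{j=1}^k \gl_{m_j}(\bbF_p),
\end{equation*}
obtained by letting each automorphism act blockwise on $A_p / p A_p$ with respect to the decomposition by exponent, and the kernel $U = \ker \rho$ is a $p$-group (it consists of the automorphisms congruent to the identity modulo $p$, which are unipotent in a suitable sense). Since every composition factor of $U$ is cyclic of order $p$, the composition factors of $\aut(A_p)$ are those of $U$ together with those of each $\gl_{m_j}(\bbF_p)$.

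Third, for an arbitrary $\gl_n(\bbF_p)$, the series $1 \tril Z \tril \spl_n(\bbF_p) \tril \gl_n(\bbF_p)$, where $Z$ is the group of scalar matrices of determinant $1$, has successive quotients $Z$ (cyclic), $\spl_n(\bbF_p)/Z = \mathrm{PSL}_n(\bbF_p)$, and $\gl_n(\bbF_p)/\spl_n(\bbF_p) \cong \bbF_p^*$ (cyclic). The quotient $\mathrm{PSL}_n(\bbF_p)$ is simple except when $(n,p) \in \{(1,p),(2,2),(2,3)\}$, in which exceptional cases it is either trivial or solvable and so contributes only cyclic composition factors. Thus every composition factor of $\gl_n(\bbF_p)$ is either cyclic or a projective special linear group, which combined with the previous paragraph proves the claim. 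The one non-routine step is the structure theorem for $\aut(A_p)$, but since~\cite{hillar2006a} spells out the matrix description of $\aut(A_p)$ explicitly, identifying $\rho$ and checking that its kernel is a $p$-group is only a bookkeeping matter.
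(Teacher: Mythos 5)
Your proof is correct and follows essentially the same route as the paper: primary decomposition, the Hillar--Rhea matrix description of $\aut(A_p)$, a surjection onto a product of general linear groups over $\bbF_p$ with $p$-group kernel, and the standard composition series of $\gl_n(\bbF_p)$. The only substantive difference is bookkeeping: the paper factors your map $\rho$ as reduction mod $p$ (landing in block upper-triangular matrices) followed by projection onto the diagonal blocks, which also repairs your slightly loose description of $\ker \rho$ --- it is the preimage of the unipotent block-triangular matrices, not merely the automorphisms congruent to the identity mod $p$, but it is still a $p$-group, being an extension of one $p$-group by another.
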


To prove this lemma, we need to introduce a few definitions and results on the theory of automorphisms of Abelian groups.  These were first studied by Ranum~\cite{ranum1907a}; however, we follow the more modern treatment by Hillar and Rhea~\cite{hillar2006a} since it is more convenient.  Since the group of automorphisms of a direct product of groups of relatively prime order is the direct product of the automorphisms of each group, it suffices to consider Abelian $p$-groups.  First, we characterize the endomorphisms of Abelian $p$-groups.

\begin{definition}[\cite{ranum1907a}, cf.~\cite{hillar2006a}]
  \label{defn:abel-R}
  Let $A = \bigtimes_{i = 1}^d \bbZ_p^{e_i}$ be a Abelian $p$-group where $e_1 < \cdots < e_d$.  Define
  \begin{equation*}
    R(A) = \setb{(m_{ij}) \in \bbZ^{d \times d}}{\text{$p^{e_i - e_j}$ divides $a_{ij}$ for all $1 \leq j \leq i \leq d$}}
  \end{equation*}
\end{definition}

One can show that $R(A)$ is a ring~\cite{ranum1907a} (cf.~\cite{hillar2006a}).  The endomorphisms $\edm(A)$ of $A$ then arise via a homomorphism defined on $R(A)$.

\begin{theorem}[\cite{hillar2006a}]
  \label{thm:abel-R-end}
  Let $A = \bigtimes_{i = 1}^d \bbZ_p^{e_i}$ be a Abelian $p$-group where $e_1 < \cdots < e_d$ and define $\psi : R(A) \ra \edm(A)$ by
  \begin{equation*}
    \psi(M)(\pi(\bma)) = \pi(M \bma)
  \end{equation*}
  where $\pi : \bbZ^d \ra A$ is the projection that maps each $\bma \in \bbZ^d$ to $(a_1 + \bbZ_p^{e_1}, \ldots, a_d + \bbZ_p^{e_d})$.  Then $\psi$ is a surjective homomorphism.
\end{theorem}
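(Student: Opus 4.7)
The plan is to verify three claims: that $\psi(M)$ is well-defined as a map $A \to A$, that $\psi$ is a ring homomorphism, and that $\psi$ is surjective. The first two are formal; the real content sits in surjectivity.

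For well-definedness, I would check that $M$ sends $\ker \pi$ to itself. An element $\bmk \in \ker \pi$ satisfies $p^{e_j} \mid k_j$ for every $j$, so each term $m_{ij} k_j$ in the sum $(M\bmk)_i = \sum_j m_{ij} k_j$ is divisible by $p^{e_j}$. When $i \leq j$ this gives divisibility by $p^{e_i}$ for free; when $i > j$, the divisibility $p^{e_i - e_j} \mid m_{ij}$ from the definition of $R(A)$ supplies the missing $p^{e_i - e_j}$ factor. Hence $p^{e_i} \mid (M\bmk)_i$ for every $i$, i.e., $M\bmk \in \ker\pi$. Additivity of $\psi(M)$ is immediate from $\bbZ$-linearity of $\pi$ and matrix-vector multiplication, so $\psi(M) \in \edm(A)$. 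The ring-homomorphism property of $\psi$ itself then reduces at once to the matrix identities $\pi((M_1+M_2)\bma) = \pi(M_1\bma) + \pi(M_2\bma)$ and $\pi((M_1 M_2)\bma) = \pi(M_1(M_2 \bma))$.

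For surjectivity I would argue as follows. Let $\phi \in \edm(A)$, and let $\bme_j$ denote the $j$-th standard basis vector of $\bbZ^d$, so that $\pi(\bme_j)$ generates the $j$-th cyclic summand of $A$ and has order $p^{e_j}$. Choose any lift $\bmv^{(j)} \in \bbZ^d$ of $\phi(\pi(\bme_j))$, and define $M$ by $m_{ij} = v^{(j)}_i$. Then $\psi(M)$ and $\phi$ agree on the generating set $\{\pi(\bme_j)\}_j$, hence everywhere; the only thing to check is that $M$ actually lies in $R(A)$. This is the main obstacle: I need $p^{e_i - e_j} \mid v^{(j)}_i$ for $i > j$. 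The key observation is that $\phi(\pi(\bme_j))$ has order dividing $p^{e_j}$, so $p^{e_j} \bmv^{(j)} \in \ker \pi$, which forces $p^{e_i} \mid p^{e_j} v^{(j)}_i$, i.e., $p^{e_i - e_j} \mid v^{(j)}_i$ in exactly the range $i > j$ demanded by the definition of $R(A)$. The strict ordering $e_1 < \cdots < e_d$ and the asymmetric divisibility condition in $R(A)$ line up precisely because of this orders-of-generators identity, so the lift automatically produces a matrix in $R(A)$ with no further normalization.
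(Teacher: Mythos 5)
Your proof is correct, and it is essentially the standard argument: the paper itself states this result without proof, citing Hillar--Rhea, and their proof proceeds exactly as you do --- well-definedness by checking $M\ker\pi \subseteq \ker\pi$ via the divisibility condition defining $R(A)$, and surjectivity by lifting the images $\phi(\pi(\bme_j))$ of the generators to integer column vectors, with the order bound $p^{e_j}\bmv^{(j)} \in \ker\pi$ forcing $p^{e_i - e_j} \mid v^{(j)}_i$ for $i > j$. No gaps to report.
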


We also need another result that relates the endomorphisms to automorphisms.

\begin{theorem}[\cite{ranum1907a}, cf.~\cite{hillar2006a}]
  \label{thm:abel-end-aut}
  Let $A = \bigtimes_{i = 1}^d \bbZ_p^{e_i}$ be an Abelian $p$-group where $e_1 < \cdots < e_d$.  Then $\psi(M)$ is an automorphism \ifft $\psi(M) \bmod p \in \gl_d(p)$ (where the modulo division is performed entrywise).
\end{theorem}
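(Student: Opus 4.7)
The plan is to pass to the quotient $A/pA$, which is a vector space over $\mathbb{F}_p$ of dimension $d$, and show that the induced endomorphism is precisely multiplication by $M \bmod p$. The key auxiliary observation is that the projection $\pi : \bbZ^d \ra A$ from \thmref{abel-R-end} composes with the quotient map $A \twoheadrightarrow A/pA$ to give a surjection $\bbZ^d \ra A/pA$ which factors through $\bbZ^d / p \bbZ^d \cong \bbF_p^d$. In this identification, the natural basis of $A/pA$ coming from the cyclic factors $\bbZ_{p^{e_i}}$ matches the standard basis of $\bbF_p^d$. Because $\psi(M)$ satisfies $\psi(M)(\pi(\bma)) = \pi(M \bma)$, it carries $pA$ into $pA$ and the induced $\bbF_p$-linear endomorphism $\overline{\psi(M)}$ on $A/pA$ is exactly left multiplication by $M \bmod p$. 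This reduction is where I expect the only real bookkeeping; the entrywise mod $p$ makes sense because the conditions defining $R(A)$ in \defref{abel-R} are vacuous below the diagonal only when $e_i > e_j$, and in any case everything is being read modulo $p$.

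For the forward direction, I use that $A$ is finite, so $\psi(M)$ is an automorphism iff it is surjective. Surjectivity of $\psi(M)$ on $A$ descends to surjectivity of $\overline{\psi(M)}$ on $A/pA$, i.e., of the matrix $M \bmod p$ acting on $\bbF_p^d$, which (since $\bbF_p^d$ is finite dimensional) is equivalent to $M \bmod p \in \gl_d(p)$.

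For the reverse direction, suppose $M \bmod p \in \gl_d(p)$, so $\overline{\psi(M)}$ is surjective on $A/pA$. Writing $B = \psi(M)(A) \leq A$, this says $B + pA = A$. Multiplying by $p$ gives $pB + p^2 A = pA$, so substituting back I obtain $B + p^2 A = B + pB + p^2 A = B + pA = A$. Iterating, $B + p^k A = A$ for every $k \geq 1$; since $A$ is a finite $p$-group we have $p^k A = 0$ for large $k$, and therefore $B = A$. Hence $\psi(M)$ is surjective, and by finiteness of $A$ it is bijective, hence an automorphism. This is essentially a Nakayama-style lifting argument specialized to the finite $p$-group setting, and I expect it to be the cleanest way to avoid tracking explicit inverses inside $R(A)$.

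The main obstacle, such as it is, is the computation in the first paragraph: one must be sure that $\psi(M)$ really does descend to $A/pA$ and that the resulting linear map is represented by $M \bmod p$ in the natural basis, rather than by some perturbation that depends on the $e_i$. Once this is in hand, both directions reduce to standard facts about finite abelian $p$-groups, and no further input from \thmref{abel-R-end} (beyond the defining formula for $\psi$) is required.
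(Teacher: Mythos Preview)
The paper does not give its own proof of this theorem: it is quoted as a known result from Ranum~\cite{ranum1907a} (cf.\ Hillar--Rhea~\cite{hillar2006a}) and used as a black box in the proof of \lemref{abel-aut}. So there is no in-paper argument to compare against.

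That said, your argument is correct and is essentially the standard one. The identification $A/pA \cong \bbF_p^d$ via the composite $\bbZ^d \xrightarrow{\pi} A \twoheadrightarrow A/pA$ is exactly right, and since $\psi(M)(\pi(\bma)) = \pi(M\bma)$ sends $\pi(p\bbZ^d) = pA$ into itself, the induced map on $A/pA$ is indeed left multiplication by $M \bmod p$ in the obvious basis; no perturbation depending on the $e_i$ appears because the divisibility constraints in $R(A)$ are invisible modulo $p$. Both directions then go through as you wrote: surjectivity of $\psi(M)$ forces surjectivity on $A/pA$, and conversely the Nakayama-type iteration $B + p^kA = A$ with $p^kA = 0$ for large $k$ gives surjectivity of $\psi(M)$, hence bijectivity by finiteness.

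One small wording slip: in your aside about \defref{abel-R} you say the divisibility conditions are ``vacuous below the diagonal only when $e_i > e_j$,'' but with $e_1 < \cdots < e_d$ the below-diagonal entries always have $e_i > e_j$, and that is precisely where the conditions are \emph{non}-vacuous. As you immediately note, this is irrelevant once one reduces mod $p$, so it does not affect the proof.
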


Now, we are ready to prove \lemref{abel-aut}.

\begin{proof}[Proof of \lemref{abel-aut}]
  Since the automorphism group of $A$ is the direct product of the automorphism groups of its Sylow subgroups, it suffices to prove this for the case where $A$ is a $p$-group.  Let $A = \bigtimes_{i = 1}^d \bbZ_p^{e_i}$ be a Abelian $p$-group where $e_1 < \cdots < e_d$ and let us define $\rho : \edm(A) \ra \bbZ_p^{d \times d}$ by $\rho(\psi(M)) = \psi(M) \bmod p$ for each $\psi(M) \in \edm(A)$.  Observe that $\rho$ is a ring homomorphism.

  By \defref{abel-R} and \thmref{abel-R-end}, the image of of $\rho$ is
  \begin{equation*}
    \im \rho = \setb{(m_{ij}) \in \bbZ_p^{d \times d}}{m_{ij} = 0 \text{ if $1 \leq j \leq i \leq d$ and $e_i \not= e_j$}}
  \end{equation*}
  In other words, the image of $\rho$ consists of block-upper triangular matrices in $\bbZ_p^{d \times d}$ where the blocks consist of those $(i, j)$ such that $e_i = e_j$.  Let $B_k$ denote the set of indexes $(i, j)$ in the \nth{k} block on the main diagonal of these matrices where $1 \leq k \leq \ell$.  Since the determinant of a block-upper triangular matrix is equal to the product of the determinants of the blocks, we see that $\rho[\aut(A)]$ consists of those matrices in $\im \rho$ where the blocks on the diagonal are invertible.  Thus,
  \begin{equation*}
    \rho[\aut(A)] = \setb{(m_{ij}) \in \im \rho}{\det\left[(m_{ij})^{B_k}\right] \not= 0 \text{ for each } k}
  \end{equation*}
  where $(m_{ij})^{B_k}$ denotes the submatrix of $(m_{ij})$ on the block $B_k$.

  We now shift our attention to $\sigma = \restr{\rho}{\aut(A)} : \aut(A) \ra \rho[\aut(A)]$ which we interpret as a surjective homomorphism between multiplicative groups.  Since $\aut(A) / \ker \sigma \cong \im \sigma$, to find the composition factors of $\aut(A)$, it suffices to show that the composition factors of the kernel and image of $\rho$ are either cyclic or projective special linear.  Now, $\im \sigma = \rho[\aut(A)]$.

  To find its composition factors, we define another homomorphism $\theta : \im \sigma \ra \gl_d(p)$ where $d_k$ is the dimension of the \nth{k} block $B_k$.  Then we define $\theta(M) = \diag(M_1, \ldots, M_\ell)$ where $M_k = (m_{ij})^{B_k}$ is the submatrix on the block $B_k$ in $M$.  Now, $\im \theta = \gl_{d_1}(p) \times \cdots \times \gl_{d_\ell}(p)$ and the composition factors of general linear groups are cyclic and projective special linear.  The kernel of $\theta$ is a $p$-group and therefore has cyclic composition factors.  It follows that $\im \sigma$ has only cyclic and projective special linear composition factors.

  All the remains is to determine the composition factors of $\ker \sigma$.  However, $\ker \sigma$ is also a $p$-group, so its composition factors are all cyclic.  It follows that the composition factors of $\aut(A)$ are cyclic and projective special linear.
\end{proof}

\redgroup*

\begin{proof}
  Let $G_m = 1 \tril \cdots \tril G_0 = G$ and $H_m = 1 \tril \cdots \tril H_0 = H$ be the radical derived series for two groups $G$ and $H$ and let $\tilde F_i = \cC(G_{m - i} / G_{m - i + 1}) = \cC(H_{m - i} / H_{m - i + 1})$.  We will first compute the isomorphisms $\iso(\tilde G, \tilde H)$ from $\tilde G$ to $\tilde H$ and then use this to find the isomorphisms $\iso(G, H)$ from $G$ to $H$.  If $G$ and $H$ are isomorphic, then \lemref{iso-wr} tells us that $\iso(\tilde G, \tilde H) \subseteq \hol(\tilde F_1) \wr \cdots \wr \hol(\tilde F_m)$.  Now, by the definition of the holomorph and \lemref{abel-aut}, the composition factors of $\hol(\tilde F_i)$ are either cyclic or projective special linear for $1 \leq i < m$.  Therefore, the composition factors of the normal subgroup $\left(\hol(\tilde F_1) \wr \cdots \wr \hol(\tilde F_{m - 1})\right)^{\tilde F_m}$ are all either cyclic or projective special linear.  (As before, $\left(\hol(\tilde F_1) \wr \cdots \wr \hol(\tilde F_{m - 1})\right)^{\tilde F_m}$ denotes a direct product of copies of $\hol(\tilde F_1) \wr \cdots \wr \hol(\tilde F_{m - 1})$ indexed by $\tilde F_m$.)  Since $\tilde F_m$ can be non-Abelian, $\hol(\tilde F_m)$ can have other composition factors which we must somehow eliminate if we are to place \GRI\spc in \CIS.

  We accomplish this using the results of~\cite{babai2011a}, which show that a group of order $n$ that does not have any Abelian normal subgroups has at most $n^{O(\log \log n)}$ automorphisms and that all of them can be enumerated within the same bound.  This implies that $\abs{\aut(\tilde F_m)} \leq \abs{\tilde F_m}^{O(\log \log \abs{\tilde F_m})} \leq n^{O(\log \log n)}$ where $n = \abs{G} = \abs{H}$ and that we can enumerate $\aut(\tilde F_m)$ within the same bound.  Consequently, we can also enumerate $\hol(\tilde F_m)$ in $n^{O(\log \log n)}$ time.

  Therefore, we can transform our instance of \GRI\spc into $n^{O(\log \log n)}$ instances of \CIS as follows.  For each $(f_m, \phi_m) \in \hol(\tilde F_m)$, we consider the coset $\left(\hol(\tilde F_1) \wr \cdots \wr \hol(\tilde F_{m - 1})\right)^{\tilde F_m} \cdot (f_m, \phi_m)$.  Note that
  \begin{equation*}
    \hol(\tilde F_1) \wr \cdots \wr \hol(\tilde F_m) = \bigcup_{(f_m, \phi_m) \in \hol(\tilde F_m)} \left(\hol(\tilde F_1) \wr \cdots \wr \hol(\tilde F_{m - 1})\right)^{\tilde F_m} \cdot (f_m, \phi_m)
  \end{equation*}
  so it suffices to find the isomorphisms from $\tilde G$ to $\tilde H$ that are contained in each such coset and accumulate the results.  For each $\left(\hol(\tilde F_1) \wr \cdots \wr \hol(\tilde F_{m - 1})\right)^{\tilde F_m} \cdot (f_m, \phi_m)$, we define a color isomorphism problem where we extend $\left(\hol(\tilde F_1) \wr \cdots \wr \hol(\tilde F_{m - 1})\right)^{\tilde F_m} \cdot (f_m, \phi_m)$ to act on the set $X = \left(\tilde F_1 \cart \cdots \cart \tilde F_m\right)^3$.  Recalling that $\tilde F_1 \cart \cdots \cart \tilde F_m$ is the underlying set of both $\tilde G$ and $\tilde H$, we solve the instance of \CIS\spc that arises when we let $f_1 : X \ra [n^3]$ and $f_2 : X \ra [n^3]$ be the indicator functions on the subsets $\setb{(x, y, xy)}{x, y \in \tilde G}$ and $\setb{(x, y, xy)}{x, y \in \tilde H}$ of $X$.  This yields all isomorphisms from $\tilde G$ to $\tilde H$ that are contained in the coset $\left(\hol(\tilde F_1) \wr \cdots \wr \hol(\tilde F_{m - 1})\right)^{\tilde F_m} \cdot (f_m, \phi_m)$.  By taking the union of all of the isomorphisms found, we obtain $\iso(\tilde G, \tilde H)$ in $n^{O(\log \log n)}$ time.

  All that remains is to show how to compute $\iso(G, H)$ from $\iso(\tilde G, \tilde H)$.  Since it was computed from $n^{O(\log \log n)}$ cosets, the description of $\iso(\tilde G, \tilde H)$ may use up to $n^{O(\log \log n)}$ generators.  For convenience, we reduce this to $O(\log^2 n)$ generators in polynomial time using standard permutation group algorithms (cf.~\cite{seress2003a}).  Using Definitions~\ref{defn:hat-G} and~\ref{defn:tilde-G}, we can define isomorphisms $\alpha : G \ra \tilde G$ and $\beta : H \ra \tilde H$.  Then $\iso(G, H) = \beta^{-1} \iso(\tilde G, \tilde H) \alpha$.
\end{proof}

\section{Reducing color isomorphism to a generalization of group isomorphism}
\label{sec:red-color-iso}
In this section, we show that the problem of computing the isometry group of a bilinear map can be reduced to a generalization of group isomorphism in polynomial time.  The first step is to construct a group whose structure depends on the bilinear map.

\begin{definition}
  Let $f : B \ra A \times A$ be a bilinear map.  Then we define $G_f$ to be the group on the set $B \times A$ with the operation $(b_1, a_1) \cdot (b_2, a_2) = (b_1 b_2, a_1 a_2 f(b_1, b_2))$.
\end{definition}

The fact that $G_f$ is a group follows easily from the assumption that $f$ is bilinear.  Readers familiar with group cohomology theory will note that this is also a consequence of a construction from group cohomology involving factor sets (cf.~\cite{rotman1995a}).  Moreover, since $A \leq Z(G)$, $G_f / Z(G_f)$ is isomorphic to a subgroup of the Abelian group $B$, so it follows that $G_f$ is nilpotent of class at most $2$.

\begin{proposition}
  \label{prop:G-2-nil-group}
  $G_f$ is a nilpotent group of class at most $2$.
\end{proposition}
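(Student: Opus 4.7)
The plan is to flesh out the remark made just before the proposition: the natural copy of $A$ inside $G_f$ is central, and the quotient by it is Abelian, so $G_f$ has a central series of length two.

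First I would verify that the subset $\{1_B\} \times A$ is contained in the center $Z(G_f)$. Because $f$ is bilinear (hence a homomorphism in each slot), we have $f(1_B, b) = f(b, 1_B) = 1_A$ for every $b \in B$. Substituting into the multiplication rule gives
\begin{equation*}
  (1_B, a) \cdot (b, a') = (b, a a'), \qquad (b, a') \cdot (1_B, a) = (b, a' a),
\end{equation*}
which coincide since $A$ is Abelian. Thus $A$ (identified with $\{1_B\} \times A$) lies in $Z(G_f)$; in particular it is a normal subgroup.

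Next I would observe that the projection $\pi : G_f \to B$, $(b,a) \mapsto b$, is a surjective group homomorphism (directly from the first coordinate of the product formula) with kernel $\{1_B\} \times A$. Hence $G_f / A \cong B$ is Abelian, and since $A \le Z(G_f)$ the quotient $G_f / Z(G_f)$ is a further quotient of $G_f / A$, so it too is Abelian. Along the way one should also check the group axioms for $G_f$ — associativity reduces to the cocycle identity $f(b_1, b_2) f(b_1 b_2, b_3) = f(b_1, b_2 b_3) f(b_2, b_3)$, which follows immediately from bilinearity; the identity element is $(1_B, 1_A)$; and the inverse of $(b, a)$ is $(b^{-1}, a^{-1} f(b, b^{-1})^{-1})$.

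Having established $G_f / Z(G_f)$ Abelian is exactly the statement that the upper central series terminates in at most two steps, giving nilpotency class at most two. There is no real obstacle here; the only computation that uses anything beyond formal manipulation is the centrality check for $A$, which rests on the fact that $f$ vanishes whenever either argument is the identity of $B$.
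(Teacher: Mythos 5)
Your proposal is correct and follows the same route the paper sketches in the sentence preceding the proposition: bilinearity forces $f(1_B,\cdot)=f(\cdot,1_B)=1_A$, so $\{1_B\}\times A\le Z(G_f)$, and $G_f/Z(G_f)$ is then a quotient of $G_f/A\cong B$, hence Abelian, giving class at most $2$. Your additional verification of the group axioms (associativity via the cocycle identity, which bilinearity supplies) matches the paper's remark that $G_f$ being a group "follows easily from the assumption that $f$ is bilinear."
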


Let $b \in B$.  It is convenient to define $\ell(b) = (b, 0)$.  We can then write $\ell(b) a$ for $(b, a)$. 

%

Our next step is to show that every isometry of $f$ gives rise to an automorphism of $G_f$.

\begin{lemma}
  \label{lem:isom-aut}
  Let $\beta : B \ra B$ be an isometry of a bilinear map $f : B \times B \ra A$.  Then the map $\phi : G_f \ra G_f$ defined by $\phi(\ell(b) a) = \ell(\beta b) a$ is an automorphism of $G_f$.
\end{lemma}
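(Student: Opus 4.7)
The plan is to verify directly that $\phi$ is well-defined, bijective, and multiplicative, with the isometry property of $\beta$ intervening precisely at the multiplicative step. Since every element of $G_f$ has a unique representation $(b,a) = \ell(b)a$ with $b \in B$ and $a \in A$, the formula $\phi(\ell(b)a) = \ell(\beta b)a$ unambiguously defines a function $G_f \to G_f$, and it is a bijection because $\beta \in \aut(B)$ is a bijection of $B$ and $\phi$ acts as the identity on the $A$-coordinate.

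The main content is to check that $\phi$ is a homomorphism. Expanding the left-hand side using the multiplication rule of $G_f$, I would compute
\begin{equation*}
  \phi\bigl((\ell(b_1)a_1)(\ell(b_2)a_2)\bigr) = \phi\bigl(\ell(b_1 b_2)\,a_1 a_2 f(b_1,b_2)\bigr) = \ell(\beta(b_1 b_2))\,a_1 a_2 f(b_1, b_2),
\end{equation*}
and then, using that $\beta$ is a homomorphism of $B$, rewrite $\ell(\beta(b_1 b_2)) = \ell(\beta b_1 \cdot \beta b_2)$. Expanding the right-hand side gives
\begin{equation*}
  \phi(\ell(b_1)a_1)\,\phi(\ell(b_2)a_2) = \ell(\beta b_1)a_1 \cdot \ell(\beta b_2)a_2 = \ell(\beta b_1 \cdot \beta b_2)\,a_1 a_2 f(\beta b_1, \beta b_2).
\end{equation*}
The two expressions agree exactly when $f(b_1, b_2) = f(\beta b_1, \beta b_2)$, which is the definition of $\beta$ being an isometry.

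There is no real obstacle; the only thing to be a little careful about is to use both the fact that $\beta \in \aut(B)$ (to handle $\beta(b_1 b_2)$ and to get bijectivity) and the fact that $\beta$ preserves $f$ (to reconcile the $A$-coordinates). The computation works because $A \leq Z(G_f)$, so the order of the $A$-factors does not matter when rearranging. Putting these pieces together shows $\phi \in \aut(G_f)$, completing the proof.
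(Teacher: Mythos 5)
Your proposal is correct and follows essentially the same route as the paper: expand the product using the multiplication rule of $G_f$, use $\beta \in \aut(B)$ to write $\ell(\beta(b_1 b_2)) = \ell(\beta b_1)\ell(\beta b_2) f(\beta b_1, \beta b_2)^{-1}$, and invoke the isometry condition $f(b_1,b_2) = f(\beta b_1, \beta b_2)$ to match the $A$-coordinates. Your explicit remarks on well-definedness and bijectivity are a harmless addition that the paper leaves implicit.
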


\begin{proof}
  Let $\ell(b_1) a_1, \ell(b_2) a_2 \in G_f$.  Then
  \begin{align*}
    \phi((\ell(b_1) a_1) (\ell(b_2) a_2)) & = \phi(\ell(b_1 b_2) f(b_1, b_2) a_1 a_2) \\
                                       {} & = \ell(\beta (b_1 b_2)) f(b_1, b_2) a_1 a_2 \\
                                       {} & = \ell(\beta (b_1)) \ell(\beta(b_2)) f(\beta b_1, \beta b_2)^{-1} f(b_1, b_2) a_1 a_2 \\
                                       {} & = \ell(\beta (b_1)) a_1 \ell(\beta(b_2)) a_2 \\
                                       {} & = \phi(\ell(b_1) a_1) \phi(\ell(b_2) a_2)
  \end{align*}
  which completes the proof.
\end{proof}

We also need to show that certain types of automorphisms of $G_f$ yield isometries of $f$.  Our first step towards this goal is to prove the following characterization of which automorphisms of $G_f$ that fix $A$ induce isometries.  For convenience, we identify $B$ with $G_f / A$ via $b \mapsto \ell(b) A$.  An automorphism $\phi$ of $G_f$ can then induces an automorphism $\beta = \restr{\phi}{B}$ of $B$ by taking images of the cosets $\ell(b) A$.

\begin{lemma}
  \label{lem:aut-hom}
  Let $f : B \times B \ra A$ be a bilinear map and let $\phi \in \aut(G_f)$ such that $\phi[A] = A$.  Let $\beta = \restr{\phi}{B} : B \ra B$ and define $\ell_{\phi} = \phi \ell \beta^{-1} : B \ra G_f$ and $\varphi_{\phi} : B \ra A$ by $\varphi_{\phi}(b) = \ell(b) \ell_{\phi}(b)^{-1}$.  Then $\varphi_{\phi} \in \hom(B, A)$ \ifft $f(b_1, b_2) = f(\beta^{-1} b_1, \beta^{-1} b_2)$ for all $b_1, b_2 \in B$.
\end{lemma}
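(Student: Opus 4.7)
The plan is to reduce the whole equivalence to a single identity inside the Abelian group $A$, obtained by expanding $\ell_\phi(b_1 b_2)$ in two different ways and equating. Throughout I will use two facts constantly: the cocycle-like multiplication rule $\ell(x)\ell(y) = \ell(xy)\,f(x,y)$ inside $G_f$ that is immediate from the definition of $G_f$, and the centrality $A \leq Z(G_f)$ provided by $\propref{G-2-nil-group}$. A preliminary observation is that $\phi(\ell(\beta^{-1} b)) \in \ell(b)\,A$ for every $b$, since $\beta$ is defined as the automorphism of $G_f/A = B$ induced by $\phi$; this guarantees $\varphi_\phi(b) \in A$, so everything in sight lives in the right place.

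First, because $\beta^{-1} \in \aut(B)$ I would rewrite
\[
\ell_\phi(b_1 b_2) \;=\; \phi\bigl(\ell(\beta^{-1} b_1 \cdot \beta^{-1} b_2)\bigr) \;=\; \phi\bigl(\ell(\beta^{-1} b_1)\,\ell(\beta^{-1} b_2)\,f(\beta^{-1} b_1, \beta^{-1} b_2)^{-1}\bigr),
\]
distribute $\phi$, and use that $\phi$ restricts to the identity on $A$ (the regime in which this lemma is applied in \secref{red-color-iso}, arranged by the subcoset restriction mentioned in the introduction to the \GRIS\spc reduction) to obtain
\[
\ell_\phi(b_1 b_2) \;=\; \ell_\phi(b_1)\,\ell_\phi(b_2)\, f(\beta^{-1} b_1, \beta^{-1} b_2)^{-1}.
\]
Second, I would write each $\ell_\phi(b_i) = \ell(b_i)\,\varphi_\phi(b_i)^{-1}$ (valid because $\varphi_\phi(b_i)\in A$ is central), then apply the cocycle rule once more to collapse $\ell(b_1)\,\ell(b_2) = \ell(b_1 b_2)\,f(b_1,b_2)$, yielding
\[
\ell_\phi(b_1)\,\ell_\phi(b_2) \;=\; \ell(b_1 b_2)\, f(b_1,b_2)\, \varphi_\phi(b_1)^{-1}\varphi_\phi(b_2)^{-1}.
\]

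Substituting this into the first expression and rewriting the left side as $\ell_\phi(b_1 b_2) = \ell(b_1 b_2)\,\varphi_\phi(b_1 b_2)^{-1}$, the factor $\ell(b_1 b_2)$ cancels and, after inverting and using that $A$ is Abelian, one is left with the single identity
\[
\varphi_\phi(b_1 b_2) \;=\; \varphi_\phi(b_1)\,\varphi_\phi(b_2)\, f(\beta^{-1} b_1, \beta^{-1} b_2)\, f(b_1, b_2)^{-1}
\]
holding in $A$ for all $b_1,b_2 \in B$. From this the equivalence is immediate: $\varphi_\phi$ is a homomorphism iff the rightmost ratio is trivial for every pair, i.e.\ iff $f(b_1,b_2) = f(\beta^{-1} b_1, \beta^{-1} b_2)$ for all $b_1,b_2$, which is exactly the claim.

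The main obstacle is purely bookkeeping: one must invoke centrality of $A$ precisely when moving $\varphi_\phi$-factors past $\ell$-factors, and apply the cocycle rule in the correct direction at each step (once to break $\ell(xy)$ into $\ell(x)\ell(y)f(x,y)^{-1}$ and once to collapse $\ell(b_1)\ell(b_2)$ into $\ell(b_1 b_2)f(b_1,b_2)$) so that the $\ell(b_1 b_2)$ factors ultimately line up and cancel. Once the two expansions are aligned, the conclusion is a one-line manipulation in $A$.
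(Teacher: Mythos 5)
Your argument is correct and is essentially the paper's own proof: the paper expands $\varphi_{\phi}(b_1 b_2)$ directly via the cocycle rule $\ell(xy)=\ell(x)\ell(y)f(x,y)^{-1}$ and centrality of $A$, arriving at exactly your identity $\varphi_{\phi}(b_1 b_2)=\varphi_{\phi}(b_1)\varphi_{\phi}(b_2)f(b_1,b_2)^{-1}f(\beta^{-1}b_1,\beta^{-1}b_2)$, whereas you reach it by matching two expansions of $\ell_{\phi}(b_1 b_2)$ --- a purely organizational difference. The one substantive divergence is your explicit appeal to $\restr{\phi}{A}$ being the identity: the lemma's stated hypothesis is only $\phi[A]=A$, and under that hypothesis alone the same computation yields $\varphi_{\phi}(b_1 b_2)=\varphi_{\phi}(b_1)\varphi_{\phi}(b_2)f(b_1,b_2)^{-1}\,\phi\bigl(f(\beta^{-1}b_1,\beta^{-1}b_2)\bigr)$, so the equivalence one actually gets is with the condition $f(b_1,b_2)=\phi\bigl(f(\beta^{-1}b_1,\beta^{-1}b_2)\bigr)$. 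The paper's proof silently performs the same replacement of $\phi\bigl(f(\beta^{-1}b_1,\beta^{-1}b_2)\bigr)$ by $f(\beta^{-1}b_1,\beta^{-1}b_2)$ in its third displayed line, so your proof is no weaker than the paper's; you have merely made explicit an assumption (that $\phi$ acts trivially on $A$, or at least on the image of $f$) which the paper's argument also needs and which is not guaranteed by $\phi[A]=A$ alone --- a point that also matters downstream in \lemref{aut-isom}, where $\phi$ is only required to fix $A$ setwise.
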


\begin{proof}
  Let $b_1, b_2 \in B$.  Then
  \begin{align*}
    \varphi_{\phi}(b_1 b_2) & = \ell(b_1 b_2) \left[ \phi \ell(\beta^{-1}(b_1 b_2)) \right]^{-1}\\
                         {} & = \ell(b_1) \ell(b_2) f(b_1, b_2)^{-1} \left[ \phi\left[ \ell(\beta^{-1} b_1) \ell(\beta^{-1} b_2) f(\beta^{-1} b_1, \beta^{-1} b_2)^{-1} \right] \right]^{-1} \\
                         {} & = \ell(b_1) \ell(b_2) \ell_{\phi}(b_2)^{-1} \ell_{\phi}(b_1)^{-1} f(b_1, b_2)^{-1} f(\beta^{-1} b_1, \beta^{-1} b_2) \\
                         {} & = \ell(b_1) \ell_{\phi}(b_1)^{-1} \varphi_{\phi}(b_2) f(b_1, b_2)^{-1} f(\beta^{-1} b_1, \beta^{-1} b_2) \\
                         {} & = \varphi_{\phi}(b_1) \varphi_{\phi}(b_2) f(b_1, b_2)^{-1} f(\beta^{-1} b_1, \beta^{-1} b_2)
  \end{align*}
  Now, $\varphi_{\phi}$ is a homomorphism \ifft $\varphi_{\phi}(b_1 b_2) = \varphi_{\phi}(b_1) \varphi_{\phi}(b_2)$ for all $b_1, b_2 \in B$.  By the above calculation, this holds \ifft $f(b_1, b_2)^{-1} f(\beta^{-1} b_1, \beta^{-1} b_2) = 1$ for all $b_1, b_2 \in B$.
\end{proof}

Next, we show that every automorphism of $G_f$ that fixes the sets $A$ and $\ell[B]$ induces an isometry of $f$.

\begin{lemma}
  \label{lem:aut-isom}
  Let $f : B \times B \ra A$ be a bilinear map and let $\phi \in \aut(G_f)$ such that $\phi[A] = A$ and $\phi[\ell[B]] = \ell[B]$.  Then $\beta = \restr{\phi}{B} : B \ra B$ is an isometry of $f$.
\end{lemma}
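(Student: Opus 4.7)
The plan is to exploit the hypothesis $\phi[\ell[B]] = \ell[B]$ to show that the map $\varphi_\phi$ from Lemma~\ref{lem:aut-hom} is trivial, so that the \ifft characterization in that lemma immediately delivers the isometry condition for $\beta^{-1}$ (and hence for $\beta$).

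First I would observe that in the concrete description $G_f = B \times A$ with multiplication $(b_1,a_1)(b_2,a_2) = (b_1 b_2, a_1 a_2 f(b_1,b_2))$, we have $\ell(b) = (b,0)$ and $A = \{0\} \times A$, so the set $\ell[B]$ intersects each coset $\ell(b) A$ in exactly the single element $\ell(b)$. Since $\phi$ permutes $\ell[B]$ by hypothesis, for each $b \in B$ there is a unique $b' \in B$ with $\phi(\ell(b)) = \ell(b')$. On the other hand, $\beta = \restr{\phi}{B}$ is defined (via the identification $B \cong G_f/A$) by $\phi(\ell(b) A) = \ell(\beta(b)) A$. Since $\phi[A] = A$, the coset $\phi(\ell(b)A)$ equals $\ell(b') A$, and by the transversal property just noted this forces $b' = \beta(b)$. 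Hence $\phi(\ell(b)) = \ell(\beta(b))$ for every $b \in B$.

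With this identity in hand, the map $\ell_\phi = \phi \ell \beta^{-1}$ from Lemma~\ref{lem:aut-hom} satisfies $\ell_\phi(b) = \phi(\ell(\beta^{-1} b)) = \ell(b)$, so $\varphi_\phi(b) = \ell(b) \ell_\phi(b)^{-1} = 1$ for all $b$, i.e.\ $\varphi_\phi$ is the trivial homomorphism in $\hom(B,A)$. Lemma~\ref{lem:aut-hom} then yields $f(b_1,b_2) = f(\beta^{-1} b_1, \beta^{-1} b_2)$ for all $b_1,b_2 \in B$. Substituting $b_i = \beta(u_i)$ gives $f(\beta u_1, \beta u_2) = f(u_1, u_2)$, so $\beta$ is an isometry of $f$ in the sense of the definition preceding \defref{bi}.

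I do not anticipate a serious obstacle; the proof is essentially a bookkeeping exercise. The only point that needs care is the first step, namely checking that the $b'$ produced from $\phi(\ell(b)) \in \ell[B]$ really is the $\beta(b)$ obtained from the quotient action on $G_f/A$. This comes down to the observation that $\ell[B]$ is a set-theoretic transversal for $A$ in $G_f$, which follows directly from the concrete product description of $G_f$ and the fact that $f(b,0)=0$ makes $A$ central.
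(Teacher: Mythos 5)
Your proposal is correct and follows essentially the same route as the paper's proof: use $\phi[A]=A$ and $\phi[\ell[B]]=\ell[B]$ together with the fact that $\ell[B]$ is a transversal of $A$ to conclude $\phi\,\ell(\beta^{-1}b)=\ell(b)$, hence $\varphi_\phi=1$, and then invoke Lemma~\ref{lem:aut-hom}. The only difference is that you spell out the transversal/uniqueness step explicitly, which the paper leaves implicit.
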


\begin{proof}
  Let $b \in B$.  Then $\phi \ell(b) \in \phi[\ell(b) A] = \ell(\beta b) A$.  Since this holds for all $b \in B$, we have $\phi \ell(\beta^{-1} b) \in \ell(b) A$.  Since $\phi[\ell[B]] = \ell[B]$, we see that in fact, $\phi \ell(\beta^{-1} b) = \ell(b)$.  Hence, $\varphi_{\phi} = 1$, which is a homomorphism.  \lemref{aut-hom} then implies that $\beta$ is an isometry of $f$.
\end{proof}

We are now ready to reduce \BI\spc to \GRIS.

\bigris*

\begin{proof}
  Let $f : B \times B \ra A$ be an instance of \BI\spc and construct the group $G_f$.  The order of this group is $\abs{A} \abs{B}$, which is polynomial in the size of our instance of \BI\spc by \defref{bi}.  We wish to compute the subgroup $\aut(G_f)_{A, \ell[B]}$ of $\aut(G_f)$ that maps $A$ to $A$ and $\ell[B]$ to $\ell[B]$.  This is a subgroup of $\sym(B) \times \sym(A)$ that acts on $G_f$ by $(\pi, \sigma)(\ell(b) a) = \ell(\pi b) \sigma a$.  Therefore, we can compute $\aut(G_f)_{A, \ell[B]}$ of $\aut(G_f)$ by solving a \GRIS\spc problem for the group $\sym(B) \times \sym(A)$.

  By \lemref{aut-isom}, for each $\phi \in \aut(G_f)_{A, \ell[B]}$, $\beta = \restr{\phi}{B}$ is an isometry of $f$.  Moreover, by \lemref{isom-aut}, for every isometry $\beta : B \ra B$ of $f$, there is an automorphism $\phi \in \aut(G_f)_{A, \ell[B]}$ such that $\beta = \restr{\phi}{B}$.  It follows that we can compute the isometry group of $f$ from $\aut(G_f)_{A, \ell[B]}$ in polynomial time.  Therefore, \BI\spc is many-one reducible to \GRIS\spc in polynomial time.
\end{proof}


\section*{Acknowledgements}
We thank Laci Babai for discussing his Holomorph idea with us.  FLG is supported by the Grant-in-Aid for Young Scientists~(A)~No.~16H05853 and the Grant-in-Aid for Scientific Research~(A)~No.~16H01705 of the Japan Society for the Promotion of Science, and the Grant-in-Aid for Scientific Research on Innovative Areas~No.~24106009 of the Ministry of Education, Culture, Sports, Science and Technology in Japan. DJR was funded by a Japan Society for the Promotion of Science Postdoctoral Fellowship No. PE15020.

\inlong{
  \appendix
  \section{Equivalence of $\CI$ and $\GIS$}
  \label{app:color-graph}
  We now give the proof of \thmref{cigis}.

\cigis*

\begin{proof}
  We start with the reduction from \GIS\spc to \CI\spc.  Let $X$ and $Y$ be graphs and suppose that we wish to find an isomorphism from $X$ to $Y$ in a subcoset $\sigma \Gamma$ the maps the vertices of $X$ to those of $Y$.  The reduction is immediate once we extend each $\sigma \pi \in \sigma \Gamma$ to map\footnote{For the purposes of the color isomorphism problem, we regard each $\sigma \pi$ as a permutation of $X \uplus Y$.} $X \times X$ to $Y \times Y$ and define colors according to the graphs $X$ and $Y$.

  The reduction from \CI\spc to \GIS\spc is slightly more complicated.  Let $\sigma \Gamma$ be a subcoset of permutations acting on a set $X$ and let $f_1 : X \ra [n]$ and $f_2 : X \ra [n]$ be as in \defref{color-iso}.  We define graphs $X_i$ for $i \in \{1, 2\}$ as follows.  The vertices of $X_i$ consist of the vertices of $X$ as well as certain gadgets that encode the colors.  For each distinct $f_i(x) \in f_i[X]$, we add a vertex labelled $f_i(x)$ and create a copy $K_{f_i(x)}$ of the complete graph on $f_i(x) + 2$ colors.  We add an edge from the vertex $f_i(x)$ to every vertex in $K_{f_i(x)}$.  Finally, we connect every vertex $x' \not= x \in X$ such that $f_i(x') = f_i(x)$ to the vertex $f_i(x)$.  Since complete subgraphs of size $3$ or larger appear only as the gadgets $K_{f_i(x)} \uplus \{f_i(x)\}$, it is easy to see that the graph isomorphisms from $X_1$ to $X_2$ correspond precisely to the color isomorphisms of the set $X$.  This completes the reduction.
\end{proof}

}

\inlong{\newpage}
\bibliographystyle{initials}
\bibliography{$HOME/LaTeX/computer-science-references,$HOME/LaTeX/math-references,$HOME/LaTeX/quantum-computing-references} 

\end{document}